\documentclass[12pt,english]{article}
\usepackage[T1]{fontenc}
\usepackage[latin9]{inputenc}
\usepackage{geometry}
\usepackage{color}
\geometry{verbose,lmargin=1in,rmargin=1in}
\usepackage{float}
\usepackage{amsmath}
\usepackage{amssymb}
\usepackage{graphicx}
\usepackage{setspace}
\usepackage[authoryear]{natbib}
\onehalfspacing

\makeatletter

\providecommand{\tabularnewline}{\\}

\date{}

\bibpunct{(}{)}{;}{autheryear}{,}{;}

\newtheorem{assumption}{Assumption}
\newtheorem{step}{Step}
\newtheorem{theorem}{Theorem}
\newtheorem{remark}{Remark}

\newtheorem{proposition}{Proposition}
\newtheorem{definition}{Definition}
\newtheorem{example}{Example}
\newtheorem{lemma}{Lemma}

\newcommand*{\indep}{%
  \mathbin{%
    \mathpalette{\@indep}{}%
  }%
}
\newcommand*{\nindep}{%
  \mathbin{
    \mathpalette{\@indep}{\not}
  }%
}
\newcommand*{\@indep}[2]{%
  \sbox0{$#1\perp\m@th$}
  \sbox2{$#1=$}
  \sbox4{$#1\vcenter{}$}
  \rlap{\copy0}
  \dimen@=\dimexpr\ht2-\ht4-.2pt\relax
  \kern\dimen@
  {#2}%
  \kern\dimen@
  \copy0 
} 


\newcommand{\T}{\mathrm{\scriptscriptstyle T}}
\newcommand{\pr}{\mathrm{pr}} 
\newcommand{\obs}{{r}}
\newcommand{\mis}{{\overline r}}
\newcommand{\obsR}{{R}}
\newcommand{\misR}{{\overline R}}
\newcommand{\R}{\mathcal{R}}
\newcommand{\de}{\mathrm{d}}

\makeatother

\usepackage{babel}
\begin{document}
\title{Causal inference with confounders missing not at random}
\author{Shu Yang\thanks{Department of Statistics, North Carolina State University, North Carolina
27695, U.S.A.}, Linbo Wang\thanks{Department of Statistical Sciences, University of Toronto, Toronto, Ontario M5S 3G3, Canada}, Peng Ding\thanks{Department of Statistics, University of California, Berkeley, California
94720, U.S.A.}}
\maketitle
\begin{abstract}
It is important to draw causal inference from 
observational studies, which, however, becomes challenging if the
confounders have missing values. Generally, causal effects
are not identifiable if the confounders are missing not at random.
We propose a novel framework to nonparametrically identify causal effects with confounders subject to an outcome-independent
missingness, that is, the missing data mechanism is independent of
the outcome, given the treatment and possibly missing confounders.
We then propose a nonparametric two-stage least squares estimator and a
parametric estimator for causal effects.

\textit{Keywords}:  Completeness; Ill-posed inverse problem;
Integral equation;  Outcome-independent missingness 
\end{abstract}

\section{Introduction }

Causal inference plays important roles
in biomedical studies and social sciences. 
If all the confounders
of the treatment-outcome relationship are observed, one can use standard techniques,
such as propensity score matching, subclassification and weighting
to adjust for confounding \citep[e.g.,][]{rosenbaum1983central,imbens2015causal}.

Much less work has been done to deal with the case when confounders have missing values.
\citet{rosenbaum1984reducing}
and \citet{d2000estimating} developed a generalized propensity score
approach. Under a modified unconfoundedness assumption, they showed
that adjusting for the missing pattern and the observed values of
confounders removes all confounding bias, and hence the causal effects are
identifiable. Moreover, the balancing property of the propensity score
carries over to the generalized propensity
score. Standard propensity score methods can hence be used to estimate the
causal effects. However, the modified unconfoundedness assumption implies
that units may have different confounders depending on the missing
pattern, which is often hard to justify scientifically. An alternative
approach assumes that the confounders are missing at random \citep{rubin1976inference}.
Under this assumption, both the full data distribution and causal
effects are identifiable, and multiple imputation can provide reasonable estimates of the causal effects \citep{rubin1987multiple,qu2009propensity,crowe2010comparison,mitra2011estimating,seaman2014inverse}.
In practice, however, the missing pattern often depends on the missing
values themselves, a scenario commonly known as missing not at random
\citep{rubin1976inference}. Previous multiple imputation methods may fail to provide
valid inference in this scenario. See \citet{mattei2009estimating}
for a comparison of various methods and \citet{lu2017propensity}
for a sensitivity analysis.

Causal inference with confounders missing not at random is challenging
because neither the full data distribution nor the causal effects
are identifiable without further assumptions. We consider
a novel setting where the confounders are subject to an outcome-independent
missingness, that is, the missing data mechanism is independent of
the outcome, given the treatment and possibly missing confounders.
This outcome-independent missingness is more plausible if the outcome
happens after the covariate measurements and missing data indicators. To identify
the causal effects under this setting, we formulate the identification
problem as solving an integral equation, and show that identification
of the full data distribution is equivalent to unique existence
of the solution to an inverse problem. This new perspective allows
us to establish a general condition for identifiability of the causal
effects. Our condition generalizes the existing results for a discrete covariate and outcome \citep{ding2014identifiability}. Motivated by
the identification result, we develop a nonparametric two-stage least
squares estimator by solving the sample analog of the integral equation.
To avoid the curse of dimensionality, we further develop parametric likelihood-based methods.

\section{Setup and assumptions\label{sec:Setup}}

\subsection{Potential outcomes, causal effects, and unconfoundedness\label{subsec:POs} }

We use potential outcomes to define causal effects. Suppose that the
binary treatment is $A\in\{0,1\}$, with $0$ and $1$ being the labels
for the control and active treatments, respectively. Each level
of treatment $a$ corresponds to a potential outcome $Y(a)$, representing
the outcome had the subject, possibly contrary to the fact, been given
treatment $a$. The observed outcome is $Y=Y(A)=AY(1)+(1-A)Y(0)$.
Let $X=(X_{1},\ldots,X_{p})$ be a vector of $p$-dimensional pre-treatment
covariates. We assume that a sample of size $n$ consists of independent
and identically distributed draws from the distribution of $\{A,X,Y(0),Y(1)\}$.
The covariate-specific causal effect is $\tau(X)=E\{Y(1)-Y(0)\mid X\}$,
and the average causal effect is $\tau=E\{Y(1)-Y(0)\}=E\{\tau(X)\}$.
We focus on $\tau$, and a similar discussion applies to the average
causal effect on the treated $\tau_{\mathrm{ATT}}=E\{Y(1)-Y(0)\mid A=1\}=E\{\tau(X)\mid A=1\}$.
The following assumptions are standard in causal inference with observational
studies \citep{rosenbaum1983central}.

\begin{assumption} \label{asump-ignorable} $\{Y(0),Y(1)\}\indep A\mid X$.
\end{assumption}

\begin{assumption} \label{asump-overlap}There exist constants $c_{1}$
	and $c_{2}$ such that $0<c_{1}\leq e(X)\leq c_{2}<1$ almost surely,
	where $e(X)=\pr(A=1\mid X)$ is the propensity score.
	
\end{assumption}

Under Assumptions \ref{asump-ignorable} and \ref{asump-overlap},
$\tau=E\{E(Y\mid A=1,X)-E(Y\mid A=0,X)\}$ is identifiable from the joint
distribution of observed data $(A,X,Y)$. \citet{rosenbaum1983central}
showed that $\{Y(0),Y(1)\}\indep A\mid e(X)$, so adjusting for the
propensity score removes all confounding. We can estimate $\tau$ through propensity score matching, subclassification
or weighting. 

\subsection{Confounders with missing values and the generalized propensity score}

We consider the case where $X$ contains missing values. Let $R=(R_{1},\ldots,R_{p})$
be the vector of missing indicators such that $R_{j}=1$ if the $j$th
component $X_{j}$ is observed and $0$ if $X_j$ is missing. Let $\mathcal{R}$
be the set of all possible values of $R$. We use $1_{p}$ to denote
the $p$-vector of $1$'s and $0_{p}$ to denote the $p$-vector of
$0$'s. The missingness pattern $R=r \in \mathcal{R}$ partitions the covariates $X$ into $X_{r}$ and $X_{\overline{r}}$, the observed and missing parts of $X$, respectively. Using \citet{rubin1976inference}'s notation, $X_{\obsR} = X_{\text{obs}}$ and $ X_{\misR} = X_{\text{mis}} $ are the realized observed and missing covariates, respectively. 
For example, if $R_{1}=1$ and $R_{j}=0$ for $j=2,\ldots,p$, then
$X_{\obsR}=X_{1}$ and $X_{\misR}=(X_{2},\ldots,X_{p})$. With missing
data, assume that we have independent and identically distributed
draws from $\{A,X,Y(1),Y(0),R\}$. \citet{rosenbaum1984reducing}
introduced the following modified unconfoundedness assumption.

\begin{assumption} \label{asump:M1-1}$\{Y(0),Y(1)\}\indep A\mid(X_{\obsR},R)$.
\end{assumption}

Under Assumption \ref{asump:M1-1}, the generalized propensity score
$e(X_{\obsR},R) =\pr(A=1\mid X_{\obsR},R)$
plays the same role as the usual propensity score $e(X) = \pr(A=1\mid X)$ in the settings
without missing covariates. \citet{rosenbaum1984reducing} showed
that adjusting for $e(X_{\obsR},R)$ balances $(X_{\obsR},R)$
and removes all confounding on average. Their approach has
the advantage of requiring no assumptions on the missing data mechanism of
$X$ for the identification of causal effects. However, their
approach implies that a pre-treatment covariate can be a confounder when
it is observed but not a confounder when it is missing. This is
often hard to justify scientifically. Moreover, if the covariate measurement
occurs after the treatment assignment, then $R$ is a post-treatment
variable affected by $A$. In this case, even if $A$ is completely randomized, Assumption \ref{asump:M1-1}
is unlikely to hold conditioning on the post-treatment variable $R$ \citep{frangakis2002principal}.

\subsection{Missing data mechanisms of the confounders \label{subsec:Missing-data-mechanisms}}


Without Assumption \ref{asump:M1-1}, one needs to impose assumptions on the missing data mechanism.
We now describe existing estimation methods under different missingness mechanisms of the confounders.
The first one is missing completely
at random \citep{rubin1976inference}.

\begin{assumption}[Missing completely at random]\label{asump:MCAR}$R\indep(A,X,Y)$.
\end{assumption}

Assumption \ref{asump:MCAR} requires that the missingness of confounders is independent of all variables $(A,X,Y)$.
It implies $\tau=E\{\tau(X)\mid R=1_{p}\}$
and thus justifies the complete-case analysis that uses only the units
with fully observed confounders. This complete-case analysis is however
inefficient by discarding all units with missing confounders. Moreover,
confounders are rarely missing completely at random.

The second one is missing at random \citep{rubin1976inference}.

\begin{assumption}[Missing at random]\label{asump:MAR}$R\indep X\mid(A,Y)$.
\end{assumption}

Under Assumption \ref{asump:MAR}, conditioning on the treatment and
outcome, the missing mechanism of confounders is independent of the
missing values themselves. Assumption \ref{asump:MAR} implies $f(A,X,Y)=f(A,Y)f(X\mid A,Y,R=1_{p})$,
and therefore, the joint distribution and its functionals including
$\tau$ are all identifiable. \citet{rubin1976inference} showed that
we can ignore the missing data mechanism in the likelihood-based and Bayesian
inferences under Assumption \ref{asump:MAR}. In this case, multiple
imputation is a popular tool for causal inference \citep[e.g.,][]{qu2009propensity,crowe2010comparison,mitra2011estimating,seaman2014inverse}. 
Although \cite{rubin2007design} suggested not using the outcome in the design of an observational study, imputing the missing confounders based on $f(X_\misR  \mid X_\obsR, A, Y) \propto f(X)f(A\mid X)f(Y\mid A,X)$ involves an outcome model in general \citep{clearinghouse2017works}.

However, missingness at random is not plausible if the missing pattern
depends on the missing values themselves. Instead, we consider the following
missing data mechanism.

\begin{assumption}[Outcome-independent missingness]\label{asump:OIMissing}$R\indep Y\mid(A,X)$.
	
\end{assumption}

Assumption \ref{asump:OIMissing} is plausible for prospective
observational studies with $X$ measured long before the outcome takes
place.
Moreover, Assumption \ref{asump:OIMissing} is more plausible than Assumptions 4 and 5 for a certain class of examples where a potentially hazardous exposure has come under substantial scrutiny, data may be collected more comprehensively for exposed than for unexposed subjects; e.g., for the water crisis in Flint, Michigan U.S. \citep{hanna2016elevated}, potentially exposed children and neighborhoods will have been more carefully measured than unexposed children and neighborhoods that will eventually serve as their comparisons.

Figure \ref{fig:1} is a causal diagram \citep{pearl1995causal}
illustrating Assumptions \ref{asump-ignorable} and \ref{asump:OIMissing}.
Graphically, $A$ and $Y$ have no common parents except for $X$,
encoding Assumption \ref{asump-ignorable}, and $R$ and $Y$ have
no common parents except $A$ and $X$, encoding Assumption \ref{asump:OIMissing}.
Our framework allows for unmeasured common causes of $R$ and $A$,
and the dependence of $R$ on the missing confounders $X_{\misR}$. Moreover, it allows $R$
to be a post-treatment variable affected by $A$. 

Assumption \ref{asump:OIMissing} exploits the temporal orders of $(A,X,R)$ and $Y$. It restricts the joint distribution of $(A,X,R,Y)$ which makes nonparametric identification possible. This is a feature of the missingness of confounders. In contrast, the missingness of outcome may depend on all variables happening before. 


\begin{figure}
\begin{centering}
	\includegraphics[scale=0.5]{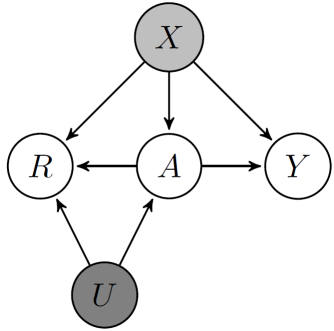}
	\par\end{centering}
	\caption{A direct acyclic graph illustrating Assumptions \ref{asump-ignorable} and \ref{asump:OIMissing}.
		White nodes represent observed variables, the light grey node represents
		the variable with missing values, and the dark node represents an unmeasured
		variable $U$.}
	\label{fig:1}
\end{figure}

We also make the following assumption to rule out degeneracy of the
missing data mechanism.

\begin{assumption}\label{asump:Pos} $\pr(R=1_{p}\mid A,X,Y)>c_{3}>0$
	almost surely for some constant $c_{3}.$ \end{assumption}

\section{Nonparametric identification\label{sec:Nonparametric-identification} }

\subsection{Identification strategy}
Assume that the distribution of $(A,X,Y,R)$ is absolutely continuous
with respect to some measure, with $f(A,X,Y,R)$ being the density or
probability mass function. Under Assumptions \ref{asump-ignorable}
and \ref{asump-overlap}, the key is to identify the joint distribution
of $f(A,X,Y)$ because $\tau$ is its functional. The following identity
relates the full data distribution to the observed data distribution:
\begin{equation}
	f(A,X,Y,R=1_{p})=f(A,X,Y)\pr(R=1_{p}\mid A,X,Y).\label{eq:selm}
\end{equation}
The left-hand side of (\ref{eq:selm}) is identifiable under Assumption
\ref{asump:Pos}. Therefore, the identification of $f(A,X,Y)$
relies on the identification of $\pr(R=1_{p}\mid A,X,Y)$. 
We now discuss how to identify $\pr(R=1_{p}\mid A,X,Y)=\pr(R=1_{p}\mid A,X)$
under Assumption \ref{asump:OIMissing}.

\subsection{Integral equation representation }

Under Assumption \ref{asump:OIMissing}, let 
\[
\xi_{ra}(X)=\frac{\pr(R=r\mid A=a,X,Y)}{\pr(R=1_{p}\mid A=a,X,Y)}=\frac{\pr(R=r\mid A=a,X)}{\pr(R=1_{p}\mid A=a,X)},\quad(a=0,1;r\in\mathcal{R}).
\]
It then suffices to identify $\xi_{ra}(X)$, because it determines
the missing data mechanism via 
\begin{equation}
	\pr(R=r\mid A=a,X,Y)=\frac{\pr(R=r\mid A=a,X,Y)}{\sum_{r'\in\mathcal{R}}\pr(R=r'\mid A=a,X,Y)}=\frac{\xi_{ra}(X)}{\sum_{r'\in\mathcal{R}}\xi_{r'a}(X)}.\label{eq:missinf prob}
\end{equation}
The following theorem shows that $\xi_{ra}(X)$ is a key term connecting
the observed data distribution $f(A,X_{\obs},Y,R)$ and the complete-case
distribution $f(A,X,Y,R=1_{p})$.
Throughout the paper, we use $\nu(\cdot)$ to denote a generic measure, e.g., the Lebesgue measure for continuous variable
and the counting measure for discrete variable.

\begin{theorem}\label{thm::basis} Under Assumption \ref{asump:OIMissing},
	for any $r$ and $a$, the following integral equation holds: 
	\begin{equation}
		f(A=a,X_{\obs},Y,R=r)=\int\xi_{ra}(X)f(A=a,X,Y,R=1_{p})\de\nu(X_{\mis}).\label{eq:m2}
	\end{equation}
\end{theorem}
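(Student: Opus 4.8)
The plan is to derive \eqref{eq:m2} by marginalizing the full joint density over the coordinates that are missing under pattern $r$, after first rewriting that joint density in a form that separates the missingness mechanism from the outcome law. Since under $R=r$ the covariate vector splits as $X=(X_{\obs},X_{\mis})$, I would begin from
\[
f(A=a,X_{\obs},Y,R=r)=\int f(A=a,X,Y,R=r)\,\de\nu(X_{\mis}),
\]
which reduces the claim to a pointwise-in-$X$ identity for the integrand $f(A=a,X,Y,R=r)$.

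Next I would factor $f(A,X,Y,R)=f(A,X)\,\pr(R\mid A,X)\,f(Y\mid A,X,R)$ and invoke Assumption \ref{asump:OIMissing}, i.e.\ $R\indep Y\mid(A,X)$, to replace $f(Y\mid A,X,R)$ by $f(Y\mid A,X)$. Writing this out with $R=r$ and with $R=1_{p}$ and taking the ratio, the common factor $f(A=a,X)\,f(Y\mid A=a,X)$ cancels, leaving
\[
f(A=a,X,Y,R=r)=\frac{\pr(R=r\mid A=a,X)}{\pr(R=1_{p}\mid A=a,X)}\,f(A=a,X,Y,R=1_{p})=\xi_{ra}(X)\,f(A=a,X,Y,R=1_{p}).
\]
Substituting this into the marginalization integral above yields (\ref{eq:m2}). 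The division by $\pr(R=1_{p}\mid A=a,X)$ is justified by Assumption \ref{asump:Pos}, which also guarantees that $\xi_{ra}(X)$ is well defined.

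I do not expect a genuine mathematical obstacle: the argument is essentially the definition of conditional density combined with one application of the outcome-independence assumption. The only point requiring care is notational bookkeeping --- $X_{\obs}=X_{r}$ and $X_{\mis}=X_{\overline r}$ both depend on the pattern $r$ being conditioned on, and the integral in (\ref{eq:m2}) must be read as integrating out exactly the coordinates missing under that $r$ while the observed coordinates $X_{\obs}$ are held fixed. It is also worth stating explicitly that all densities are taken with respect to a common dominating product measure, so that the cancellation in the displayed ratio is legitimate.
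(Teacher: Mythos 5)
Your proposal is correct and follows essentially the same route as the paper: marginalize the joint density over $X_{\mis}$, then rewrite the integrand $f(A=a,X,Y,R=r)$ as $\xi_{ra}(X)\,f(A=a,X,Y,R=1_{p})$ using Assumption \ref{asump:OIMissing}; the paper obtains this pointwise identity by writing the ratio $\pr(R=r\mid A=a,X,Y)/\pr(R=1_{p}\mid A=a,X,Y)$ directly, whereas you factor the joint first, but these are equivalent. Your explicit appeal to Assumption \ref{asump:Pos} to justify the division is a welcome (if minor) addition that the paper leaves implicit.
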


\begin{proof} The conclusion follows because observed data distribution
	is the complete data distribution averaged over the missing data:
	\begin{eqnarray*}
		f(A=a,X_{\obs},Y,R=r) & = & \int f(A=a,X,Y,R=r)\de\nu(X_{\mis})\\
		& = & \int\frac{\pr(R=r\mid A=a,X,Y)}{\pr(R=1_{p}\mid A=a,X,Y)}f(A=a,X,Y,R=1_{p})\de\nu(X_{\mis})\\
		& = & \int\xi_{ra}(X)f(A=a,X,Y,R=1_{p})\de\nu(X_{\mis}).
	\end{eqnarray*}
\end{proof}


Theorem \ref{thm::basis} is the basis of our identification analysis.
In (\ref{eq:m2}), $f(A=a,X_{\obs},Y,R=r)$ and $f(A=a,X,Y,R=1_{p})$
are identifiable from the observed data. We have thus turned the identification
of $\xi_{ra}(X)$ to the problem of solving $\xi_{ra}(X)$ from (\ref{eq:m2}).
This requires additional technical assumptions below.

\subsection{Bounded completeness and identification of the joint distribution }

To motivate our identification conditions, we first consider the case
with discrete $X$ and $Y$, where (\ref{eq:m2}) becomes a linear
system. To solve $\xi_{ra}(X)$ from (\ref{eq:m2}), we need the linear
system to be non-degenerate.

\begin{proposition}\label{proposition1}Under Assumption \ref{asump:OIMissing},
	suppose that $X$ and $Y$ are discrete, and $X_{j}\in\{x_{j1},\ldots x_{jJ_{j}}\}$
	for $j=1,\ldots,p$, and $Y\in\{y_{1},\ldots,y_{K}\}$. Let $q=J_{1}\times\cdots\times J_{p}$,
	and let $\Theta_{a}$ be a $K\times q$ matrix with the $k$-th row
	being $f(X,y_{k},R=1_{p},A=a)$ evaluated at all possible values of
	$X$. The distribution of $(A,X,Y,R)$ is identifiable if $\textup{Rank}(\Theta_{a})=q$
	for $a=0,1$. \end{proposition}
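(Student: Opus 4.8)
The plan is to specialize the integral equation~(\ref{eq:m2}) to the discrete setting, recognize it as a decoupled family of finite linear systems, and read unique solvability off the rank hypothesis. Fix $a\in\{0,1\}$, a pattern $r\in\mathcal{R}$, and a value $x_{\obs}$ of the observed coordinates $\{j:r_{j}=1\}$. Replacing the integral in~(\ref{eq:m2}) by a sum over the possible values $x_{\mis}$ of the missing coordinates gives, for each $k=1,\dots,K$,
\begin{equation*}
f(A=a,X_{\obs}=x_{\obs},y_{k},R=r)=\sum_{x_{\mis}}\xi_{ra}(x_{\obs},x_{\mis})\,f\bigl(A=a,X=(x_{\obs},x_{\mis}),y_{k},R=1_{p}\bigr).
\end{equation*}
Everything here is an observed-data functional except the unknowns $\{\xi_{ra}(x_{\obs},x_{\mis})\}_{x_{\mis}}$, so this is a linear system $M\xi=b$ in which $M$ is exactly the submatrix of $\Theta_{a}$ formed by the columns indexed by the full covariate values $(x_{\obs},x_{\mis})$ consistent with $(r,x_{\obs})$, and $b$ records $f(A=a,X_{\obs}=x_{\obs},y_{k},R=r)$ over $k$.

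Next I would observe that for a fixed $r$ the systems corresponding to different $x_{\obs}$ involve disjoint blocks of the unknowns $\xi_{ra}(\cdot)$, so the whole system decouples into these blocks. Since $\mathrm{Rank}(\Theta_{a})=q$ equals the number of columns of $\Theta_{a}$ (in particular $q\le K$), all $q$ columns of $\Theta_{a}$ are linearly independent, hence so is any subset of them; thus each block coefficient matrix $M$ has full column rank and trivial kernel. Consequently $M\xi=b$ has at most one solution, and Theorem~\ref{thm::basis} exhibits one, namely the true $\xi_{ra}$; therefore $\xi_{ra}(X)$ is uniquely pinned down, hence identified, at every full covariate value $X=(x_{\obs},x_{\mis})$. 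Letting $r$, $x_{\obs}$, and $a$ range over all their values identifies $\xi_{ra}(X)$ for every $r$, $a$, $X$.

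Finally, substituting the identified $\xi_{ra}(X)$ into~(\ref{eq:missinf prob}) identifies $\pr(R=r\mid A=a,X)$ for all $r,a,X$; the denominator there is at least $\xi_{1_{p}a}(X)=1$, so no positivity caveat is needed. The rank condition also makes every column of $\Theta_{a}$ nonzero, so $\pr(A=a,X=x,R=1_{p})>0$ and hence $\pr(R=1_{p}\mid A=a,X=x)>0$; combined with Assumption~\ref{asump:OIMissing} this lets us solve~(\ref{eq:selm}) for $f(A,X,Y)$. The full joint then factors as $f(A,X,Y,R)=f(A,X,Y)\,\pr(R\mid A,X)$ under Assumption~\ref{asump:OIMissing}, and both factors are now identified.

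There is no heavy computation here; the crux is the linear-algebra step — that full column rank of $\Theta_{a}$ is inherited by every column-submatrix — together with the bookkeeping that the pattern-$r$ equations split into independent blocks indexed by $x_{\obs}$, so that the single condition $\mathrm{Rank}(\Theta_{a})=q$ suffices to solve all blocks simultaneously. The only point demanding care is that $\xi_{ra}$ is indexed by the \emph{full} covariate vector even though each equation only involves its observed part, which is precisely what produces the block structure.
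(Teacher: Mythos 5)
Your proof is correct and follows essentially the same route as the paper's: both reduce the integral equation to finite linear systems whose coefficient matrices are column submatrices of $\Theta_{a}$, and both invoke the fact that full column rank is inherited by column submatrices. The only difference is that you argue uniformly for general $p$ and arbitrary missing patterns (and add the helpful observations that the denominator in~(\ref{eq:missinf prob}) is at least $1$ and that the rank condition forces $\pr(R=1_{p}\mid A=a,X)>0$), whereas the paper spells out the $p=2$ case pattern by pattern.
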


We relegate the proof to the Supplementary Material. For the special case with a binary $X$ and a discrete $Y$, the rank condition in Proposition \ref{proposition1} is equivalent to $X \nindep Y\mid  (A=a, R=1)$ for $a=0$ and $1$, which is testable based on the observed data \citep{ding2014identifiability}. For general cases, we need to extend the rank condition that ensures the unique existence of $\xi_{ra}(X)$. We use the notion of bounded completeness for general $X$ and $Y$, which is related to the concept of a complete statistic \citep{lehmann1950completeness, newey2003instrumental}. Below, we say that a function $g(x)$ is bounded in $\mathcal{L}_{1}$-metric
if $\sup_{x}|g(x)|\leq c$ for some $0<c<\infty$.

\begin{definition}\label{def:bddcomp}A function
	$f(X,Y)$ is bounded complete in $Y$, if $\int g(X)f(X,Y)\de\nu(X)=0$
	implies $g(X)=0$ almost surely for any measurable function $g(X)$
	bounded in $\mathcal{L}_{1}$-metric. \end{definition}

\citet{d2011completeness} gave sufficient conditions for the bounded
completeness. It also appeared in other identification analyses such
nonparametric instrumental variable regression models \citep{darolles2011nonparametric}
and measurement error models \citep{an2012well}.

We invoke the following assumption motivated by Theorem \ref{thm::basis}
and Definition \ref{def:bddcomp}.

\begin{assumption}\label{asump:bddcomp} The joint distribution $f(A=a,X,Y,R=1_{p})$
	is bounded complete in $Y$, for $a=0,1$.
	
\end{assumption}

\begin{remark}\label{rmk: bddcomp}When $X$ and $Y$ are discrete
	with finite supports, Assumption \ref{asump:bddcomp} is equivalent
	to the rank condition in Proposition \ref{proposition1}. Moreover,
	Assumption \ref{asump:bddcomp} implies Assumption \ref{asump-overlap}.
\end{remark}

Under Assumption \ref{asump:Pos}, Assumption \ref{asump:bddcomp}
is  sufficient to ensure the unique existence of
$\xi_{ra}(X)$ from (\ref{eq:m2}). We present the result in the following
theorem.

\begin{theorem}\label{thm2}Under Assumptions \ref{asump:OIMissing}\textendash \ref{asump:bddcomp},
	the distribution of $(A,X,Y,R)$ is identifiable. \end{theorem}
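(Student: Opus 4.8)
The plan is to leverage the structure established in Theorem~\ref{thm::basis} and reduce identification of the full data law to the unique existence of the functions $\xi_{ra}(X)$ satisfying the integral equation~(\ref{eq:m2}), for each missingness pattern $r\in\mathcal{R}$ and each treatment level $a\in\{0,1\}$. First I would fix $a$ and $r$, and argue that~(\ref{eq:m2}) always has at least one solution, namely the true $\xi_{ra}(X)=\pr(R=r\mid A=a,X)/\pr(R=1_p\mid A=a,X)$, which is bounded in $\mathcal{L}_1$-metric: the numerator is at most $1$ and the denominator is bounded below by $c_3>0$ by Assumption~\ref{asump:Pos}, so $0\le \xi_{ra}(X)\le 1/c_3$. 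Thus the content is \emph{uniqueness}. Suppose $\xi_{ra}^{(1)}$ and $\xi_{ra}^{(2)}$ are two bounded solutions; then their difference $g(X)=\xi_{ra}^{(1)}(X)-\xi_{ra}^{(2)}(X)$ is bounded in $\mathcal{L}_1$-metric and, subtracting the two copies of~(\ref{eq:m2}), satisfies
\[
\int g(X)\,f(A=a,X,Y,R=1_p)\,\de\nu(X_{\mis})=0\quad\text{for $\nu$-almost every }(X_{\obs},Y).
\]

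Here the subtlety is that the integral in~(\ref{eq:m2}) is only over the missing coordinates $X_{\mis}$, with $X_{\obs}$ held fixed, whereas Definition~\ref{def:bddcomp} and Assumption~\ref{asump:bddcomp} are phrased as an integral over the full $X$. I would resolve this by conditioning on $X_{\obs}=x_{\obs}$: for $\nu$-a.e.\ fixed $x_{\obs}$, define $\tilde g(X_{\mis})=g(x_{\obs},X_{\mis})$ and $\tilde f(X_{\mis},Y)=f(A=a,x_{\obs},X_{\mis},Y,R=1_p)$; then $\int \tilde g(X_{\mis})\tilde f(X_{\mis},Y)\,\de\nu(X_{\mis})=0$ for a.e.\ $Y$. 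Bounded completeness of $f(A=a,X,Y,R=1_p)$ in $Y$ passes to the conditional slices (a standard fact: if the joint density is bounded complete, so is the conditional density given the remaining covariates, up to the marginal factor which is positive a.e.), so $\tilde g(X_{\mis})=0$ a.e., i.e.\ $g(x_{\obs},X_{\mis})=0$ a.e.; integrating over $x_{\obs}$ gives $g\equiv 0$ a.e. Hence $\xi_{ra}(X)$ is uniquely identified for every $r$ and $a$.

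Once all $\xi_{ra}(X)$ are identified, I would recover the missingness probabilities via~(\ref{eq:missinf prob}), namely $\pr(R=r\mid A=a,X,Y)=\xi_{ra}(X)/\sum_{r'\in\mathcal{R}}\xi_{r'a}(X)$, which under Assumption~\ref{asump:OIMissing} equals $\pr(R=r\mid A=a,X)$; in particular $\pr(R=1_p\mid A=a,X,Y)$ is identified. Then identity~(\ref{eq:selm}), rearranged as
\[
f(A,X,Y)=\frac{f(A,X,Y,R=1_p)}{\pr(R=1_p\mid A,X,Y)},
\]
identifies the full data law $f(A,X,Y)$, since the numerator is identified from the observed data and the denominator is now identified and bounded below by $c_3$. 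Finally, $f(A,X,Y)$ together with $f(R\mid A,X,Y)=f(R\mid A,X)$ determines the entire joint law of $(A,X,Y,R)$, completing the proof. The main obstacle is the conditioning step in the second paragraph: one must be careful that bounded completeness in $Y$ of the joint $f(A=a,X,Y,R=1_p)$ legitimately descends to the conditional densities given $X_{\obs}$ for almost every value of $X_{\obs}$, and that the exceptional null sets of $x_{\obs}$ do not accumulate into a positive-measure failure — this is where Assumptions~\ref{asump:Pos} and~\ref{asump:bddcomp}, and measurability of the whole construction in $x_{\obs}$, do the real work.
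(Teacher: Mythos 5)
Your overall architecture is the same as the paper's: reduce identification to uniqueness of $\xi_{ra}$ in \eqref{eq:m2}, use Assumption \ref{asump:Pos} to confine attention to solutions bounded in $\mathcal{L}_1$-metric, invoke completeness to force the difference of two solutions to vanish, and then recover the missingness probabilities via \eqref{eq:missinf prob} and the full law via \eqref{eq:selm}. The problem is the step you yourself flag as the crux. You fix $X_{\obs}=x_{\obs}$ and invoke the claim that bounded completeness in $Y$ of the joint $f(A=a,X,Y,R=1_p)$ ``passes to the conditional slices'' $f(A=a,x_{\obs},X_{\mis},Y,R=1_p)$ for almost every $x_{\obs}$, calling this a standard fact. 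It is not standard, and you do not prove it. The natural route to it is a contrapositive argument: if slice completeness failed on a positive-measure set $N$ of $x_{\obs}$ values, one would patch the slice-annihilating functions $h_{x_{\obs}}(X_{\mis})$ into a single bounded nonzero $g(X)=h_{X_{\obs}}(X_{\mis})\mathbf{1}\{X_{\obs}\in N\}$ annihilating the joint, contradicting Assumption \ref{asump:bddcomp}. But this requires a measurable (and uniformly bounded) selection $x_{\obs}\mapsto h_{x_{\obs}}$, which is exactly the ``real work'' you defer and never do. As written, the proof has a gap at its central step. (In the discrete case the slice statement is just the fact that full column rank of $\Theta_a$ implies full column rank of its column submatrices, as in the proof of Proposition \ref{proposition1}; in the general case it is genuinely delicate.)

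The gap is also unnecessary. The paper's proof avoids slices entirely: starting from
\[
\int\bigl\{\xi_{ra}^{(1)}(X)-\xi_{ra}^{(2)}(X)\bigr\}f(A=a,X,Y,R=1_{p})\,\de\nu(X_{\mis})=0
\]
for all $(X_{\obs},Y)$, it simply \emph{integrates this identity over $X_{\obs}$} (legitimate by Fubini, since the difference is bounded and $f$ is integrable), yielding
\[
\int\bigl\{\xi_{ra}^{(1)}(X)-\xi_{ra}^{(2)}(X)\bigr\}f(A=a,X,Y,R=1_{p})\,\de\nu(X)=0
\quad\text{for all }Y,
\]
which is precisely the hypothesis of Definition \ref{def:bddcomp} for the joint density, so Assumption \ref{asump:bddcomp} applies directly and gives $\xi_{ra}^{(1)}=\xi_{ra}^{(2)}$ almost surely. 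This discards the pointwise-in-$X_{\obs}$ information rather than exploiting it, but the assumption is calibrated so that the averaged identity suffices. If you replace your conditioning step with this one-line integration, the remainder of your argument (existence and boundedness of the true solution, recovery of $\pr(R=r\mid A,X,Y)$ and of $f(A,X,Y,R)$) matches the paper and is correct.
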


\begin{proof} Suppose that $\xi_{ra}^{(1)}(X)$ and $\xi_{ra}^{(2)}(X)$
	are two solutions to \eqref{eq:m2}: 
	\[
	f(A=a,X_{\obs},Y,R=r)=\int\xi_{ra}^{(k)}(X)f(A=a,X,Y,R=1_{p})\de\nu(X_{\mis}),\quad(k=1,2)
	\]
	which imply $\int\{\xi_{ra}^{(1)}(X)-\xi_{ra}^{(2)}(X)\}f(A=a,X,Y,R=1_{p})\de\nu(X_{\mis})=0.$
	Integrating this identity with respect to $X_{\obs}$, we have 
	\[
	\int\{\xi_{ra}^{(1)}(X)-\xi_{ra}^{(2)}(X)\}f(A=a,X,Y,R=1_{p})\de\nu(X)=0.
	\]
	Assumption \ref{asump:Pos} implies that $\xi_{ra}(X)$ is bounded
	in $\mathcal{L}_{1}$-metric, which further implies that $\xi_{ra}^{(1)}(X)-\xi_{ra}^{(2)}(X)$
	is bounded in $\mathcal{L}_{1}$-metric. Under Assumption \ref{asump:bddcomp},
	Definition \ref{def:bddcomp} implies that $\xi_{ra}^{(1)}(X)-\xi_{ra}^{(2)}(X)=0$
	almost surely. Therefore, \eqref{eq:m2} has a unique solution $\xi_{ra}(X)$.
	Based on the definition of $\xi_{ra}(X)$, we can identify $\pr(R=1_{p}\mid A,X,Y)$
	by \eqref{eq:missinf prob}. Finally, we identify $f(A,X,Y)$ through
	\eqref{eq:selm} as $f(A,X,Y)=f(A,X,Y,R=1_{p})/\pr(R=1_{p}\mid A,X,Y).$
\end{proof}

If the distribution of $(A,X,Y)$ is identifiable, we can use a
standard argument to show that $\tau$ and $\tau_{\mathrm{ATT}}$ are identifiable
under Assumption \ref{asump-ignorable}.
We give explicit identification formulas for $\tau$ and $\tau_{\mathrm{ATT}}$ in the next subsection, 
which are the basis for constructing the nonparametric estimator.

\subsection{Nonparametric identification formulas for average causal effects}

Under Assumptions \ref{asump-ignorable},
\ref{asump:OIMissing}\textendash \ref{asump:bddcomp}, we can identify
$\tau$ and $\tau_{\mathrm{ATT}}$ in two steps. First, 
\begin{eqnarray}
	\tau(X) & = & E(Y\mid A=1,X)-E(Y\mid A=0,X)\label{eq::tauX1}\\
	& = & E(Y\mid A=1,X,R=1_{p})-E(Y\mid A=0,X,R=1_{p}),\label{eq:tau(X)}
\end{eqnarray}
where \eqref{eq::tauX1} follows from Assumption \ref{asump-ignorable},
and \eqref{eq:tau(X)} follows from Assumption \ref{asump:OIMissing}.
Therefore, we can identify $\tau(X)$ using a complete-case analysis
based on \eqref{eq:tau(X)}.

Second, under Assumptions \ref{asump:OIMissing}\textendash \ref{asump:bddcomp},
Theorem \ref{thm2} shows that the distribution of $(A,X,Y,R)$ is
identifiable, which implies that the marginal distribution of $X$,
$f(X)$, and the conditional distribution of $X$, $f(X\mid A=1)$,
are also identifiable. Therefore, both $\tau=E\{\tau(X)\}$ and $\tau_{\mathrm{ATT}}=E\{\tau(X)\mid A=1\}$
are identifiable. The following theorem summarizes these results.

\begin{theorem}\label{Thm:tau}Under Assumptions \ref{asump-ignorable},
	\ref{asump:OIMissing}\textendash \ref{asump:bddcomp},
	the average causal effect $\tau$ is identified by 
	\begin{eqnarray}
		\tau=\sum_{a=0}^{1}\int\tau(X)\frac{f(A=a,X,R=1_{p})}{\pr(R=1_{p}\mid A=a,X)}\de\nu(X),\label{eq::ate}
	\end{eqnarray}
	and the average treatment effect on the treated $\tau_{\mathrm{ATT}}$
	is identified by 
	\begin{eqnarray}
		\tau_{\mathrm{ATT}}=\int\tau(X)\frac{f(X,R=1_{p}\mid A=1)}{\pr(R=1_{p}\mid A=1,X)}\de\nu(X).\label{eq::att}
	\end{eqnarray}
	where $\tau(X)$ is identified by (\ref{eq:tau(X)}), $\pr(A=a,R=1_{p})$
	and $f(A=a,X,R=1_{p})$ depend only on the observed data, and $\pr(R=1_{p}\mid A=a,X)$
	can be identified from (\ref{eq:m2}), for $a=0$ and $1$. \end{theorem}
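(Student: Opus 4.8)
The plan is to obtain both formulas from the identifiability of the full-data law $f(A,X,Y)$ granted by Theorem~\ref{thm2}, together with the elementary facts that $\tau=E\{\tau(X)\}$ and $\tau_{\mathrm{ATT}}=E\{\tau(X)\mid A=1\}$, and that under Assumptions~\ref{asump-ignorable} and~\ref{asump:OIMissing} the covariate-specific effect equals the complete-case contrast \eqref{eq:tau(X)}. The only work, then, is to express the marginal law $f(X)$ and the conditional law $f(X\mid A=1)$ in terms of the observed-data quantities $f(A=a,X,R=1_{p})$ and the missingness probabilities $\pr(R=1_{p}\mid A=a,X)$, the latter being identified from the integral equation \eqref{eq:m2} as shown in the proof of Theorem~\ref{thm2}.

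First I would record the pointwise representation of the full-data density. By Assumption~\ref{asump:OIMissing}, $\pr(R=1_{p}\mid A,X,Y)=\pr(R=1_{p}\mid A,X)$, so the selection identity \eqref{eq:selm} rearranges to
\[
f(A=a,X,Y)=\frac{f(A=a,X,Y,R=1_{p})}{\pr(R=1_{p}\mid A=a,X)},\qquad a=0,1,
\]
where the numerator is an observed-data density, the denominator is identified via \eqref{eq:missinf prob} after solving \eqref{eq:m2} for $\xi_{ra}(X)$, and the denominator exceeds $c_{3}>0$ by Assumption~\ref{asump:Pos}. Integrating out $Y$ with respect to $\nu$ gives $f(A=a,X)=f(A=a,X,R=1_{p})/\pr(R=1_{p}\mid A=a,X)$, and summing over $a$ gives $f(X)=\sum_{a=0}^{1}f(A=a,X,R=1_{p})/\pr(R=1_{p}\mid A=a,X)$.

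The two conclusions then follow by substitution. Writing $\tau=\int\tau(X)f(X)\,\de\nu(X)$ and plugging in the formula for $f(X)$ produces \eqref{eq::ate}. For the treated, I would use $f(X\mid A=1)=f(A=1,X)/\pr(A=1)$ together with $f(X,R=1_{p}\mid A=1)=f(A=1,X,R=1_{p})/\pr(A=1)$; substituting the expression for $f(A=1,X)$ obtained above and cancelling $\pr(A=1)$ yields \eqref{eq::att}. Here Assumption~\ref{asump-overlap}, which holds under Assumption~\ref{asump:bddcomp} by Remark~\ref{rmk: bddcomp}, guarantees $\pr(A=1)>0$ and that both regressions in \eqref{eq:tau(X)} are defined on the support of $X$, so $\tau(X)$ is well defined.

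I do not anticipate a real obstacle: once Theorem~\ref{thm2} delivers identifiability of $f(A,X,Y,R)$, the argument is bookkeeping with densities relative to the generic dominating measure $\nu$. The two points that warrant a little care are invoking Assumption~\ref{asump:OIMissing} at the right moment so that $\pr(R=1_{p}\mid A=a,X)$ does not depend on $Y$ and can therefore be matched with the solution of \eqref{eq:m2}, and using Assumptions~\ref{asump:Pos} and~\ref{asump-overlap} to keep every denominator bounded away from zero, which both justifies interchanging the $Y$-integration with the division and ensures $\tau(X)$, $\tau$ and $\tau_{\mathrm{ATT}}$ are finite.
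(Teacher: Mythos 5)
Your proposal is correct and follows essentially the same route as the paper: both identify $f(A=a,X)$ (equivalently $f(X\mid A=a)$) as the ratio of the complete-case density $f(A=a,X,R=1_p)$ to the missingness probability $\pr(R=1_p\mid A=a,X)$ delivered by Theorem~\ref{thm2}, then average $\tau(X)$ from \eqref{eq:tau(X)} over $f(X)=\sum_a f(A=a,X)$ for \eqref{eq::ate} and over $f(X\mid A=1)$ for \eqref{eq::att}. The only difference is that you spell out the intermediate step of dividing the selection identity \eqref{eq:selm} by $\pr(R=1_p\mid A=a,X)$ and integrating out $Y$, which the paper leaves implicit.
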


\begin{proof} First, we can identify the conditional distribution
	of $X$ given $A=a$ by 
	\[
	f(X\mid A=a)=\frac{f(X,R=1_{p}\mid A=a)}{\pr(R=1_{p}\mid A=a,X)},\quad(a=0,1).
	\]
	Averaging $\tau(X)$ over $f(X\mid A=1)$, we obtain the identification
	formula \eqref{eq::att}.
	
	Second, we can identify the marginal distribution of $X$ by 
	\[
	f(X)=\sum_{a=0}^{1}f(A=a,X)=\sum_{a=0}^{1}\frac{f(A=a,X,R=1_{p})}{\pr(R=1_{p}\mid A=a,X)}.
	\]
	Averaging $\tau(X)$ over the above distribution, we obtain the identification
	formula \eqref{eq::ate}. \end{proof}

\section{Estimation of the average causal effect\label{sec::estimation} }

\subsection{Nonparametric two-stage least squares estimator\label{sec:Nonparametric-estimation}}

Theorem \ref{Thm:tau} shows the nonparametric identification formulas
at the population level. Based on \eqref{eq::ate}, we propose a nonparametric
two-stage least squares estimator of $\tau$ with finite samples $(A_{i},X_{i},Y_{i},R_{i})_{i=1}^{n}$. We omit the estimation of $\tau_{\mathrm{ATT}}$. We estimate $\tau(X)$, $\pr(A=a,R=1_{p})$, $f(X\mid A=a,R=1_{p})$
and $\pr(R=1_{p}\mid A=a,X)$ by standard nonparametric methods, denoted
by $\hat{\tau}(X)$, $\widehat{\pr}(A=a,R=1_{p})$, and $\hat{f}(X\mid A=a,R=1_{p})$,
respectively. Thus, the key is to estimate $\pr(R=1_{p}\mid A=a,X)$,
or, equivalently, $\xi_{ra}(X)$ based on (\ref{eq:m2}).

In the first stage, we obtain $\hat{f}(X_{\obs},Y,R=r\mid A=a)$ and
$\hat{f}(X,Y,R=1_{p}\mid A=a)$ as the nonparametric sample analogs
of $f(X_{\obs},Y,R=r\mid A=a)$ and $f(X,Y,R=1_{p}\mid A=a)$. Replacing
them in (\ref{eq:m2}) leads to 
\begin{equation}
	\hat{f}(X_{\obs},Y,R=r\mid A=a)=\int\xi_{ra}(X)\hat{f}(X,Y,R=1_{p}\mid A=a)\de\nu(X_{\mis}),\label{eq:m3}
\end{equation}
which is a Fredholm integral equation of the first kind. Solving (\ref{eq:m3})
raises several challenges. First, although Theorem \ref{thm2} shows
that the population equation (\ref{eq:m2}) has a unique solution,
the sample equation (\ref{eq:m3}) may not. Second, $\xi_{ra}(X)$
is an infinite-dimensional parameter, and its estimation often relies
on some approximation. Third, solving $\xi_{ra}(X)$ from (\ref{eq:m3})
is an ill-conditioned problem, in the sense that even a slight perturbation
of $\hat{f}(X_{\obs},Y,R=r\mid A=a)$ and $\hat{f}(X,Y,R=1_{p}\mid A=a)$
can lead to a large variation in the solution for $\xi_{ra}(X)$.
As a result, replacing $f(X_{\obs},Y,R=r\mid A=a)$
and $f(X,Y,R=1_{p}\mid A=a)$ \eqref{eq:m2} by their consistent estimators does not necessarily
yield a consistent estimator of $\xi_{ra}(X)$ \citep{darolles2011nonparametric}.

To tackle these issues, we use the series approximation \citep{kress1989linear,newey2003instrumental}
in the second stage. Let the set $\mathcal{H}_{J}=\{h^{j}(X)=\exp(-X^{\T}X)X^{\lambda_{j}}:j=1,\ldots,J\}$
form a Hermite polynomial basis, where $X^{\lambda_{j}}=X_{1}^{\lambda_{j1}}\cdots X_{p}^{\lambda_{jp}}$,
with $\lambda_{j}=(\lambda_{j1},\ldots,\lambda_{jp})$ and $|\lambda_{j}|=\sum_{l=1}^{p}\lambda_{jl}$
increasing in $j$. Let $\tilde{X}=\Sigma^{-1/2}(X-\mu)$ be a standardized
version of $X$, where $\mu$ and $\Sigma$ are constant vector and
matrix. We approximate $\xi_{ra}(X)$ by $\xi_{ra}(X)\approx\sum_{j=1}^{J}\beta_{ra}^{j}h^{j}(\tilde{X}).$
Thus, for each missing pattern $R=r$, we approximate \eqref{eq:m2}
by 
\begin{eqnarray}
	f(X_{\obs},Y,R=r\mid A=a) & \approx & \sum_{j=1}^{J}\beta_{ra}^{j}\int h^{j}(\tilde{X})f(X,Y,R=1_{p}\mid A=a)\de\nu(X_{\mis})\nonumber \\
	& = & \sum_{j=1}^{J}\beta_{ra}^{j}H_{ra}^{j}(X_{\obs},Y)f(X_{\obs},Y,R=1_{p}\mid A=a),\label{eq::sample-approx}
\end{eqnarray}
where the conditional expectation $H_{ra}^{j}(X_{\obs},Y)=E\{h^{j}(\tilde{X})\mid A=a,X_{\obs},Y,R=1_{p}\}$
is over the distribution $f(X_{\mis}\mid A=a,X_{\obs},Y,R=1_{p})$.

We need the empirical versions of $H_{ra}^{j}(X_{\obs},Y)$ and $f(X_{\obs},Y,R=1_{p}\mid A=a)$
for estimation. First, for unit $i$, let $\hat{H}_{ra,i}^{j}=\hat{E}\{h^{j}(\tilde{X})\mid A_{i}=a,X_{\obs,i},Y_{i},R_{i}=1_{p}\}$
be a nonparametric estimator of the conditional expectation. 
Second, we obtain $\hat{f}(X_{\obs},Y,R=1_{p}\mid A=a)$,
a nonparametric estimator of $f(X_{\obs},Y,R=1_{p}\mid A=a)$, based
on the complete cases. 
Although we obtain these estimators based on the complete cases, we still
need to partition the confounders into $(X_{\obs},X_{\mis})$ based
on the missing pattern $R=r$. 
Because the sample version of the approximation
\eqref{eq::sample-approx} is linear, we can estimate the $\beta_{ra}^{j}$'s
by minimizing the residual sum of squares 
\begin{equation}
	\sum_{i=1}^{n}\left\{ \hat{f}(X_{\obs,i},Y_{i},R_{i}=r\mid A_{i}=a)\vphantom{\sum_{R=1}^{\infty}}-\sum_{j=1}^{J}\beta_{ra}^{j}\hat{H}_{ra,i}^{j}\hat{f}(X_{\obs,i},Y_{i},R_{i}=1_{p}\mid A_{i}=a)\right\} ^{2}.\label{eq:obj fctn}
\end{equation}

To solve the ill-conditioned problem, we restrict the parameter space
of $\xi_{ra}(X)$ to a compact space, which can effectively regularizes
the problem to be well posed. 
Given the approximation of $\xi_{ra}(X)$, we require
the vector of coefficients $\beta_{ra}$, the concatenation of $(\beta_{ra}^{1},\ldots,\beta_{ra}^{J})$,
satisfy $\beta_{ra}^{\T}\Lambda\beta_{ra}\leq B$,
where $\Lambda$ is a positive definite $J\times J$ matrix and $B$
is a positive constant. Therefore, we propose to estimate $\beta_{ra}$
by minimizing (\ref{eq:obj fctn}), subject to the constraint $\beta_{ra}^{\T}\Lambda\beta_{ra}\leq B$.
We present
more details of regularization in the Supplementary Material.

We then estimate $\xi_{ra}(X)$ and the probability $\pr(R=1_{p}\mid A=a,X)$
by 
\[
\hat{\xi}_{ra}(X)=\sum_{j=1}^{J}\hat{\beta}_{ra}^{j}h^{j}(\tilde{X}),\quad\widehat{\pr}(R=1_{p}\mid A=a,X)=\left\{ 1+\sum_{r\neq1_{p}}\hat{\xi}_{ra}(X)\right\} ^{-1},
\]
and finally estimate $\hat{\tau}$ by 
\begin{equation}
	\sum_{a=0}^{1}\widehat{\pr}(A=a,R=1_{p})\int\hat{\tau}(X)\frac{\hat{f}(X\mid A=a,R=1_{p})}{\widehat{\pr}(R=1_{p}\mid A=a,X)}\de\nu(X).\label{eq:tauhat}
\end{equation}

We now comment on subtle technical issues for implementing the above estimator.
First, we need to standardize the confounders by $\tilde{X}=\Sigma^{-1}(X-\mu)$.
We choose $\mu$ and $\Sigma$ to be the mean and covariance matrix
of confounders for the complete cases. This choice is innocuous because
$\mathcal{H}_{J}$ remains the same for other values of $\mu$ and $\Sigma$. Second, we
use the importance sampling technique to approximate the integral
in \eqref{eq:tauhat} because it is difficult to directly sample from
the nonparametric density estimators. Third, we use the bootstrap
to construct confidence intervals. 
\cite{newey1997convergence} provided a relatively simple variance estimation approach treating
the nonparametric estimators as if they were parametric given the
fixed tuning parameters. 
In the light of treating the
nonparametric estimators as if they were parametric, one might expect
the nonparametric bootstrap to work for our estimator; see, e.g., \cite{horowitz2007asymptotic}. For all bootstrap
samples, we use the same tuning parameters, such as the smoothing
parameter in the smoothing splines and the bandwidth in the kernel
density estimator. 
In the Supplementary Material, we give more technical details and explicate the procedure in an example with a scalar confounder. 


\subsection{Parametric estimation: likelihood-based and Bayesian inferences\label{sec:Parametric-estimation}}

The nonparametric estimator above suffers from the curse of dimensionality.
We propose a parametric estimation for moderate- or high-dimensional covariates. Let $Z_{i}=(A_{i},X_{i},Y_{i},R_{i})$
be the complete data and $Z_{\obsR,i}=(A_{i},X_{\obsR,i},Y_{i},R_{i})$
be the observed data for unit $i $. The complete-data
likelihood is $L(\theta\mid Z_{1},\ldots,Z_{n})=\prod_{i=1}^{n}f(Z_{i};\theta)$,
where $\theta=(\alpha,\beta_{0},\beta_{1},\eta_{0},\eta_{1},\lambda)$
and 
\begin{equation}
	f(Z_{i};\theta)=\pr(R_{i}\mid A_{i},X_{i};\eta_{A_{i}})f(Y_{i}\mid A_{i},X_{i};\beta_{A_{i}})\pr(A_{i}\mid X_{i};\alpha)f(X_{i};\lambda).
	\label{eq::parametricmodels}
\end{equation}
The observed-data likelihood is $L(\theta\mid Z_{\obsR,1},\ldots,Z_{\obsR,n})=\prod_{i=1}^{n}\{\sum_{r\in\mathcal{R}}I(R_i=r) \int f(Z_{i};\theta)\de\nu(X_{\mis,i})\}$.
Under Assumptions \ref{asump:OIMissing}\textendash \ref{asump:bddcomp}
as in Theorem \ref{thm2}, $\theta$ is identifiable if the parametric
models in \eqref{eq::parametricmodels} are not over-parametrized. The bounded completeness condition
holds for many commonly-used models, such as generalized linear models,
a location family of absolutely continuous distributions with a compact
support, and so on. See \citet{blundell2007semi}, \citet{hu2017nonparametric},
and the Supplementary Material for additional examples.

We first discuss the likelihood-based inference. Let $\tau(X_{i};\theta)=E(Y_{i}\mid A_{i}=1,X_{i};\beta_{1})-E(Y_{i}\mid A_{i}=1,X_{i};\beta_{0})$ be the covariate-specific average causal effect,
and let
\[
\hat{\tau}(\theta)=n^{-1}\sum_{i=1}^{n}\tau(X_{i};\theta),\quad\tau=\tau(\theta)=E\{\tau(X_{i};\theta)\}=E\{\hat{\tau}(\theta)\}.
\]
We first obtain the maximum likelihood estimate $\hat{\theta}$ and
then estimate $\tau$ by $\tau(\hat{\theta})$. The formula $\tau(\theta)$
involves integrating over the distribution of the confounders. To avoid
this complexity, we can use $\hat{\tau}(\hat{\theta})$ to estimate
$\tau$. We can use the bootstrap to construct confidence intervals.

We then discuss the Bayesian inference. Suppose that we can simulate
the posterior distribution of the missing confounders and the parameter
$\theta$. They further induce posterior distributions of $\hat{\tau}(\theta)$
and $\tau=\tau(\theta)$. Technically, the posterior distribution
of $\hat{\tau}(\theta)$ is different from that of $\tau$. The former
depends on the observed confounder values, but the latter does not.
See \citet{ding2018causal} for more discussions.

We give more computational details in the Supplementary Material,
including a fractional imputation algorithm \citep{yang2015fractional}
and a Bayesian procedure for a parametric model. Our future work will develop multiple imputation methods under Assumptions \ref{asump:OIMissing}\textendash \ref{asump:bddcomp}. From \eqref{eq::parametricmodels}, we need to use both the treatment and outcome models in the imputation step as in the full Bayesian procedure.

\section{Simulation\label{sec:Simulation-study}}

\subsection{Design of the simulation}

We use simulation to compare our estimators to existing ones. First, we consider the
unadjusted estimator, which is the simple difference-in-means of the
outcomes between the treated and control groups. We use it to quantify
the degree of confounding. Second, we consider the generalized
propensity score weighting estimator, with the generalized propensity
scores estimated separately by logistic regressions for each missing pattern \citep{rosenbaum1984reducing}. Third, we consider three multiple imputation estimators. The first estimator uses the outcome in the
imputation model, but the second does not \citep{mitra2011estimating}.
The third estimator uses the missing pattern in the propensity score
model \citep{qu2009propensity}.

We evaluate the finite sample performance of these estimators with
the missingness of confounders satisfying Assumption \ref{asump:OIMissing}.
In the first setting in \S \ref{subsec:Confounder}, one confounder has missing values, and we investigate the performance of the proposed nonparametric
estimator and the sensitivity to the choice of tuning parameters.
In the second setting in \S \ref{subsec:Multiple X}, multiple confounders have missing values, and we investigate the performance of the proposed parametric
estimator. In each setting, we choose sample size $n=400$,
$800$ and $1600$, and generate $2,000$ Monte Carlo samples for each sample size. For
multiple imputation estimators, we generate $100$ imputed datasets.
For all estimators, we use the bootstrap with $500$ replicates to
estimate the variances.

\subsection{One confounder subject to missingness\label{subsec:Confounder}}

The confounders $X_{i}=(X_{1i},X_{2i})$ follow $X_{1i}\sim\mathcal{N}(1,1)$
and $X_{2i}\sim$ Bernoulli$(0.5)$. The potential outcomes
follow $Y_{i}(0)=0.5+2X_{1i}+X_{2i}+\epsilon_{0i}$ and $Y_{i}(1)=3X_{1i}+2X_{2i}+\epsilon_{1i}$,
where $\epsilon_{0i}\sim\mathcal{N}(0,1)$ and $\epsilon_{1i}\sim\mathcal{N}(0,1)$.
The average causal effect $\tau$ is $1.$ The treatment indicator
$A_{i}$ follows Bernoulli($\pi_{i})$, where logit$(\pi_{i})=1.25-0.5X_{1i}-0.5X_{2i}$.
The missing indicator of $X_{1i},$ $R_{1i}$, follows Bernoulli$(p_{i})$,
where logit$(p_{i})=-2+2X_{1i}+A_{i}(1.5+X_{2i})$. The average response
rate is about $67\%$. Other variables do not have missing values.

For the proposed nonparametric estimator, we estimate $\hat{\tau}(X)$
using cubic splines with $5$ knots, and the density functions using
kernel-based estimators with the Gaussian kernel. We use the
$10$-fold cross-validation to choose the smoothing parameters in the smoothing
spline estimator and the bandwidths in the kernel-based estimators.
For $\hat{\xi}_{ra}(X)$, we choose $J = 5$ Hermite polynomial basis
functions, and $B = 50$ as the bound for regularization.

Table \ref{tab:Simulation-results}(a)
compares the nonparametric estimator to the existing ones.
The unadjusted estimator, the propensity score weighting estimator,
and multiple imputation estimators are biased. As a result, the coverage
rates of the confidence intervals for these methods are quite poor.
Our proposed method has negligible biases
and good coverages, with variances decreasing with the sample sizes.

\begin{table}[t]
	\caption{\label{tab:Simulation-results} Simulation: bias ($\times10^{-2}$)
		and variance ($\times10^{-3}$) of the point estimator of $\tau$,
		variance estimate ($\times10^{-3}$), and coverage ($\%$) of $95\%$
		confidence intervals}
	
	\resizebox{\textwidth}{!}{%
		\begin{tabular}{lcccccccccccc}
			Method  & Bias  & Var  & VE & Cvg  & Bias  & Var  & VE & Cvg  & Bias  & Var  & VE & Cvg\tabularnewline
			\multicolumn{13}{c}{(a) Comparing the nonparametric estimator with existing ones}\tabularnewline
			& \multicolumn{4}{c}{$n=400$} & \multicolumn{4}{c}{$n=800$} & \multicolumn{4}{c}{$n=1600$}\tabularnewline
			Unadj  & $-127.5$  & $77.4$  & $73.7$ & $0.3$  & $-127.4$  & $38.0$  & $37.5$ & $0.0$  & $-127.2$  & $17.5$  & $18.6$ & $0.0$\tabularnewline
			GPSW  & $-55.1$  & $42.4$  & $44.2$ & $22.2$  & $-54.9$  & $20.9$  & $20.7$ & $5.8$  & $-54.4$  & $9.5$  & $9.9$ & $0.4$\tabularnewline
			MI1  & $41.5$  & $35.4$  & $36.7$ & $40.6$  & $41.0$  & $15.5$  & $17.2$ & $9.5$  & $40.8$  & $7.6$  & $8.3$ & $0.5$\tabularnewline
			MI2  & $-10.8$  & $60.0$  & $63.8$ & $91.4$  & $-9.2$  & $28.8$  & $30.8$ & $91.4$  & $-9.1$  & $13.7$  & $14.9$ & $86.6$\tabularnewline
			MIMP  & $29.3$  & $73.5$  & $71.5$ & $83.7$  & $28.5$  & $33.7$  & $32.6$ & $65.0$  & $28.3$  & $14.9$  & $16.0$ & $30.8$\tabularnewline
			NonPara  & $1.2$  & $19.4$  & $18.8$ & $95.1$  & $0.9$  & $9.6$  & $8.1$ & $95.2$  & $0.8$  & $3.9$  & $3.8$ & $94.9$\tabularnewline
			\multicolumn{13}{c}{(b) Comparing the parametric estimator with existing ones}\tabularnewline
			& \multicolumn{4}{c}{$n=400$} & \multicolumn{4}{c}{$n=800$} & \multicolumn{4}{c}{$n=1600$}\tabularnewline
			Unadj  & $32.2$ & $85.2$ & $85.8$ & $81.5$ & $32.2$ & $44.3$ & $42.9$ & $65.8$ & $31.9$ & $20.3$ & $21.6$ & $43.1$\tabularnewline
			GPSW  & $8.4$ & $174.6$ & $246.1$ & $97.2$ & $8.8$ & $84.2$ & $94.2$ & $94.9$ & $8.3$ & $40.0$ & $44.0$ & $92.4$\tabularnewline
			MI1  & $7.7$ & $180.5$ & $238.0$ & $96.1$ & $7.1$ & $93.5$ & $106.4$ & $95.2$ & $6.9$ & $47.5$ & $54.8$ & $93.4$\tabularnewline
			MI2  & $3.0$ & $162.1$ & $209.9$ & $97.3$ & $3.1$ & $84.2$ & $94.1$ & $95.8$ & $2.6$ & $42.8$ & $49.1$ & $94.6$\tabularnewline
			MIMP  & $12.9$ & $177.0$ & $239.2$ & $95.7$ & $12.2$ & $93.9$ & $107.5$ & $93.8$ & $12.1$ & $47.4$ & $55.0$ & $91.8$\tabularnewline
			Para  & $1.6$ & $95.4$ & $95.4$ & $95.3$ & $0.4$ & $48.3$ & $48.0$ & $95.0$ & $0.0$ & $23.0$ & $24.2$ & $95.4$\tabularnewline
	\end{tabular}}
	
	{\footnotesize{}{}{}{}{}{}{}Unadj: the unadjusted estimator;
		GPSW: the generalized propensity score weighting estimator; NonPara: the proposed nonparametric estimator;
		Para: the proposed parametric estimator; For the multiple imputation estimators,
		MI1 uses the outcome in the imputation, MI2 does not use the outcome
		in the imputation, and MIMP is the multiple imputation missingness
		pattern method of \citet{qu2009propensity}. }{\footnotesize \par}
\end{table}


To assess the sensitivity of the nonparametric estimator to the choice
of tuning parameters $J$ and $B$, we specify a $4\times3$ design
with $(J,B)\in\{(3,50),(3,100),(5,50),(5,100)\}$ and $n\in\{400,800,1600\}$.
Table \ref{tab:addSim} shows the mean squared errors. For each $(J,B)$, the mean squared error decreases
with the sample size. The mean squared error decreases with $J$,
and is relatively insensitive to the choice of $B$. The mean squared
errors remain small across all cases.

\begin{table}[t]
	\caption{\label{tab:addSim}Simulation results for different tuning parameters:
		mean squared errors ($\times10^{-3}$) of the proposed estimator of
		$\tau$ for different choices of $(J,B)$ based on $2,000$ Monte
		Carlo samples}
	\centering{}%
	\begin{tabular}{lccc}
		$(J,B)$  & \multicolumn{1}{c}{$n=400$} & \multicolumn{1}{c}{$n=800$} & \multicolumn{1}{c}{$n=1600$}\tabularnewline
		$(3,50)$  & $26.8$  & $13.9$  & $8.3$\tabularnewline
		$(3,100)$  & $27.0$  & $14.1$  & $8.7$\tabularnewline
		$(5,50)$  & $19.5$  & $9.7$  & $4.1$\tabularnewline
		$(5,100)$  & $21.3$  & $10.2$  & $4.5$\tabularnewline
	\end{tabular}
\end{table}

\subsection{Multiple confounders subject to missingness\label{subsec:Multiple X}}

Let $X_{i}=(X_{1i},\ldots,X_{6i})$. We generate $X_{1i}$ and $X_{2i}$
from $\mathcal{N}(1,1)$, $X_{3i}$ and $X_{4i}$ from \{Bernoulli$(0.5)-0.5\}/0.5$,
$X_{5i}=X_{1i}+X_{2i}+X_{3i}+X_{4i}+\epsilon_{5i}$ with $\epsilon_{5i}\sim\mathcal{N}(0,1)$,
and $X_{6i}$ from Bernoulli$(p_{6i})$ with logit$(p_{6i})=-X_{5i}$.
The potential outcomes follow $Y_{i}(0)=(1,X_{i}^{\T})\beta_{0}+\epsilon_{0i}$
and $Y_{i}(1)=(1,X_{i}^{\T})\beta_{1}+\epsilon_{1i}$, where $\beta_{0}=(-1.5,1,-1,1,-1,1,1)^{\T}$
and $\beta_{1}=(0,-1,1,-1,1,-1,-1)^{\T}$, $\epsilon_{0i}\sim\mathcal{N}(0,1)$
and $\epsilon_{1i}\sim\mathcal{N}(0,1)$. The average treatment
effect is $\tau=-0.5$. The treatment indicator $A_{i}$ follows Bernoulli($\pi_{i})$,
where logit$(\pi_{i})=(1,X_{i}^{\T})\alpha$ and $\alpha=0.5\times(2,1,1,1,1,-2,-2)^{\T}$.
Covariates $X_{5i}$ and $X_{6i}$ have missing values, but other variables
do not. The missingness pattern for $X_{5i}$ and $X_{6i}$, $R_{i}=(R_{5i},R_{6i})\in\{(11),(10),(01),(00)\}$,
follows Multinominal$(p_{11,i},p_{10,i},p_{01,i},p_{00,i})$, where
logit($p_{11,i}$)$=[1+3\exp\{(1,A_{i},X_{i}^{\T})\eta\}]^{-1}$,
logit($p_{kl,i}$)$=[\exp\{-(1,A_{i},X_{i}^{\T})\eta\}+3]^{-1}$ for
$kl\in\{10,01,00\}$, and $\eta=0.25\times(-4,1,1,1,1,1,-1,-1)^{\T}$.
The average percentages of these missingness patterns are about $49\%$, $17\%$, $17\%$ and $17\%$,
respectively.


Table \ref{tab:Simulation-results}(b) compares the parametric maximum
likelihood estimator to the existing ones. The unadjusted estimator
has large biases due to confounding. Multiple
imputation estimators have large biases, although the coverages of
confidence intervals appear good due to the overestimation of variances.
In contrast, our estimator has negligible biases and good coverages.

\section{Applications\label{sec:Application}}

\subsection{The causal effect of smoking on the blood lead level\label{sec:The-causal-effect}}

We use a dataset from the 2015\textendash 2016 U.S. National
Health and Nutrition Examination Survey to estimate the causal effect
of smoking on the blood lead level \citep{hsu2013calibrating}. The data set includes $2949$ adults consisting of $1102$ smokers, denoted by $A=1$, and $1847$
nonsmokers, denoted by $A=0$. All subjects were at least 15 years old and had no tobacco use besides cigarette smoking in the previous 5 days.
The outcome $Y$ is the lead level in blood, ranging from $0.05$ ug/dl to $23.51$
ug/dl. The confounders $X$ include the income-to-poverty level, age
and gender. The income-to-poverty level has missing values, but
other variables do not have. The missingness of the income-to-poverty
level is likely to be not at random because subjects with high incomes
may be less likely to disclose their income information. It is plausible that Assumption \ref{asump:OIMissing} holds, because this missingness is perceivably unrelated to the lead level in blood after controlling for the income information \citep{davern2005effect}. The missing rate of
the income-to-poverty is $14.0\%$ for smokers and $15.2\%$ for non-smokers.
We apply the proposed procedure to obtain estimates
separately for groups stratified by age and gender, and then average
over the empirical distribution of age and gender.

Table \ref{tab:applications}(a) shows the results. We note substantial differences in the point estimates
between our estimator and the competitors, which illustrate the impact
of the missing data assumption on causal inference in the presence
of missing confounders. In contrast to the existing estimators, our
estimator handles the confounders missing not at random more properly.
Based on the nonparametric estimator, smoking increases the lead level
in blood by $0.20$ ug/dl on average.

\begin{table}
	\caption{\label{tab:applications} Point estimate, standard error
		by the bootstrap and $95\%$ confidence interval}
	
	\begin{tabular}{ccccccccc}
		& Est  & SE  & $95\%$ CI  &  &  & Est  & SE  & $95\%$ CI\tabularnewline
		\multicolumn{9}{c}{(a) The causal effect of smoking on the blood lead level in \S\ref{sec:The-causal-effect}}\tabularnewline
		Unadj    & $0.44$  & $0.05$  & $(0.35,0.54)$  &  & MI1  & $0.34$  & $0.05$  & $(0.25,0.44)$ \tabularnewline
		PSW      & $0.12$  & $0.05$  & $(0.02,0.22)$  &  & MI2  & $0.35$  & $0.05$  & $(0.25,0.44)$\tabularnewline
		NonPara  & $0.20$  & $0.07$  & $(0.05,0.36)$  &  & MIMP & $0.35$  & $0.05$  & $(0.25,0.44)$\tabularnewline
		\multicolumn{9}{c}{(b) The causal effect of education on general health satisfaction in \S\ref{sec::health}}\tabularnewline
		Unadj  & $-0.57$  & $0.034$  & $(-0.64,-0.51)$  &  & MI1  & $-0.24$  & $0.057$  & $(-0.36,-0.13)$\tabularnewline
		GPSW  & $-0.25$  & $0.054$  & $(-0.36,-0.14)$  &  & MI2  & $-0.26$  & $0.057$  & $(-0.38,-0.15)$\tabularnewline
		Para  & $-0.32$  & $0.051$  & $(-0.41,-0.21)$  &  & MIMP  & $-0.23$  & $0.057$  & $(-0.34,-0.11)$\tabularnewline
	\end{tabular}
	
	{\footnotesize{}{}{}{}{}Unadj: the unadjusted estimator; GPSW:
		the generalized propensity score weighting estimator; NonPara: the proposed nonparametric estimator;
		Para: the proposed parametric estimator; For the multiple imputation estimators,
		MI1 uses the outcome in the imputation, MI2 does not use the outcome
		in the imputation, and MIMP is the multiple imputation missingness
		pattern method of \citet{qu2009propensity}. }{\footnotesize \par}
\end{table}

\subsection{The causal effect of education on general health satisfaction}\label{sec::health}

We use a dataset from the 2015\textendash 2016 U.S. National Health
and Nutrition Examination Survey to estimate the average causal effect
of education on general health satisfaction. The dataset includes
$4,845$ subjects. Among them, $76\%$ individuals have at least
high school education, denoted by $A=1$, and $24\%$ do not, denoted
by $A=0$. The outcome $Y$ is the general health satisfaction
score ranging from $1$ to $5$, with lower values indicating better satisfaction.
The observed outcomes have mean $2.88$ and standard deviation $0.96$.
The confounders $X $ include age, gender, race, marital status, income-to-poverty level, and an indicator of ever having pre-diabetes risk.
The income-to-poverty level and pre-diabetes risk variables have missing
values, but other variables do not have. 
The missingness of the family poverty ratio and the pre-diabetes risk
variables is likely to be related to the missing values themselves.
It is plausible that this missingness is unrelated to the outcome
value conditioning on the treatment and confounders.

Table \ref{tab:applications}(b) shows the results. Although 
qualitatively
all estimators show that education is beneficial in improving general
health satisfaction, we note differences in the point
estimates between our estimator and the competitors. This illustrates
the impact of the missing data assumption on causal inference with missing confounders. Based on the parametric estimator,
education improves the general health satisfaction by $0.32$ on average.

%

\section*{Acknowledgments}

Yang is supported in part by Oak Ridge Associated Universities, U.S. National Science Foundation and National Institute of Health. 
Ding is supported in part by the U.S. National Science Foundation and Institute of Education Sciences. 
The authors thank Professor Eric Tchetgen Tchetgen for valuable discussions,
and Professor Xiaohong Chen for useful references. The comments from
the Associate Editor and two reviewers improved the quality of our
paper. 

\section*{Supplementary material}

Supplementary material includes additional
proofs, further discussions on the nonparametric and parametric
estimators, and additional simulation.

\bibliographystyle{dcu}
\bibliography{CI2_JOE}

\newpage 

\part*{Supplementary material}

\global\long\def\theequation{S\arabic{equation}}
\setcounter{equation}{0}

\global\long\def\thelemma{S\arabic{lemma}}
\setcounter{lemma}{0}

\global\long\def\theexample{S\arabic{example}}
\setcounter{example}{0}

\global\long\def\thesection{S\arabic{section}}
\setcounter{section}{0}

\global\long\def\thetheorem{S\arabic{theorem}}
\setcounter{theorem}{0}

\global\long\def\theremark{S\arabic{remark}}
\setcounter{remark}{0}

\global\long\def\thestep{S\arabic{step}}
\setcounter{step}{0}

\global\long\def\theassumption{S\arabic{assumption}}
\setcounter{assumption}{0}

\newtheorem{proof1}{Proof}
\global\long\def\theproof{S\arabic{proof1}}
\setcounter{proof1}{0}

\global\long\def\theproposition{S\arabic{proposition}}
\setcounter{proposition}{0}

\global\long\def\thefigure{S\arabic{figure}}
\setcounter{proposition}{0}

\global\long\def\thefigure{S\arabic{table}}
\setcounter{table}{0}

\section{Proofs}

\subsection{Proof of Proposition \ref{proposition1}}

We prove the result for $p=2.$ Proofs for general $p$ are similar
and hence omitted. For discrete covariates with $R=(0,0)$, (\ref{eq:m2})
reduces to 
\begin{eqnarray}
f\{A=a,Y,R=(0,0)\} & = & \sum_{i=1}^{J_{1}}\sum_{j=1}^{J_{2}}\frac{\pr\{R=(0,0)\mid X_{1i},X_{2j},A=a\}}{\pr\{R=(1,1)\mid X_{1i},X_{2j},A=a\}}\nonumber \\
&  & \times f\{A=a,X_{1i},X_{2j},Y,R=(1,1)\},\quad(a=0,1).\label{eq:1}
\end{eqnarray}
In a matrix form, (\ref{eq:1}) becomes 
\begin{equation}
\left(\begin{array}{c}
f\{A=a,y_{1},R=(0,0)\}\\
\vdots\\
f\{A=a,y_{K},R=(0,0)\}
\end{array}\right)_{K\times1}=\Theta_{a}\left(\begin{array}{c}
\xi_{(0,0)a}(x_{11},x_{21})\\
\vdots\\
\xi_{(0,0)a}(x_{1J_{1}},x_{2J_{2}})
\end{array}\right)_{(J_{1}J_{2})\times1},\label{eq:p3}
\end{equation}
where 
\[
\Theta_{a}=\left(\begin{array}{ccc}
f\{A=a,x_{11},x_{21},y_{1},R=(1,1)\} & \cdots & f\{A=a,x_{1J_{1}},x_{2J_{2}},y_{1},R=(1,1)\}\\
\vdots & \ddots & \vdots\\
f\{A=a,x_{11},x_{21},y_{K},R=(1,1)\} & \cdots & f\{A=a,x_{1J_{1}},x_{2J_{2}},y_{K},R=(1,1)\}
\end{array}\right)_{K\times(J_{1}J_{2})},
\]
and 
\[
\xi_{(0,0)a}(x_{1i},x_{2j})=\frac{\pr\{R=(0,0)\mid A=a,x_{1i},x_{2j}\}}{\pr\{R=(1,1)\mid A=a,x_{1i},x_{2j}\}}.
\]
In the linear system (\ref{eq:p3}), the vector on the left hand side
and the coefficients in $\Theta_{a}$ on the right hand side are identifiable
because they depend only on the observed data. The linear system for
the $\xi_{(0,0)a}(X_{1},X_{2})$'s has a unique solution if and only
if $\Theta_{a}$ has a full column rank $J_{1}J_{2}$. Similarly,
for $R=(1,0)$,

\begin{equation}
\left(\begin{array}{c}
f\{A=a,X_{1},y_{1},R=(1,0)\}\\
\vdots\\
f\{A=a,X_{1},y_{K},R=(1,0)\}
\end{array}\right)_{K\times1}=\Theta_{X_{1}a}\left(\begin{array}{c}
\xi_{(1,0)a}(X_{1},x_{21})\\
\vdots\\
\xi_{(1,0)a}(X_{1},x_{2J_{2}})
\end{array}\right)_{J_{2}\times1},\quad(a=0,1),\label{eq:p4}
\end{equation}
where 
\[
\Theta_{X_{1}a}=\left(\begin{array}{ccc}
f\{A=a,X_{1},x_{21},y_{1},R=(1,1)\} & \cdots & f\{A=a,X_{1},x_{2J_{2}},y_{1},R=(1,1)\}\\
\vdots & \ddots & \vdots\\
f\{A=a,X_{1},x_{21},y_{K},R=(1,1)\} & \cdots & f\{A=a,X_{1},x_{2J_{2}},y_{K},R=(1,1)\}
\end{array}\right)_{K\times J_{2}}.
\]
The linear system (\ref{eq:p4}) has a unique solution for the $\xi_{(1,0)a}(X_{1},X_{2})$'s
if and only if $\Theta_{X_{1}a}$ has a column rank $J_{2}$, which
is guaranteed if $\Theta_{a}$ has a full column rank $J_{1}J_{2}$
. For $R=(0,1)$,

\begin{equation}
\left(\begin{array}{c}
f\{A=a,X_{2},y_{1},R=(0,1)\}\\
\vdots\\
f\{A=a,X_{2},y_{K},R=(0,1)\}
\end{array}\right)_{K\times1}=\Theta_{x_{2}a}\left(\begin{array}{c}
\xi_{(0,1)a}(x_{11},X_{2})\\
\vdots\\
\xi_{(0,1)a}(x_{1J_{1}},X_{2})
\end{array}\right)_{J_{1}\times1},\quad(a=0,1),\label{eq:p5}
\end{equation}
where 
\[
\Theta_{X_{2}a}=\left(\begin{array}{ccc}
f\{A=a,x_{11},X_{2},y_{1},R=(1,1)\} & \cdots & f\{A=a,x_{1J_{1}},X_{2},y_{1},R=(1,1)\}\\
\vdots & \ddots & \vdots\\
f\{A=a,x_{11},X_{2},y_{K},R=(1,1)\} & \cdots & f\{A=a,x_{1J_{1}},X_{2},y_{K},R=(1,1)\}
\end{array}\right)_{K\times J_{1}}.
\]
The linear system (\ref{eq:p5}) has a unique solution for the $\xi_{(0,1)a}(X_{1},X_{2})$'s
if and only if $\Theta_{X_{2}a}$ has a column rank $J_{1}$, which
is guaranteed if $\Theta_{a}$ has a full column rank $J_{1}J_{2}$.
Therefore, $\xi_{ra}(X_{1},X_{2})$ is identifiable if and only if
$\Theta_{a}$ has a full column rank $J_{1}J_{2}$.

It follows that 
\[
\pr(R=r\mid A=a,X_{1},X_{2})=\frac{\xi_{ra}(X_{1},X_{2})}{\sum_{r'\in\mathcal{R}}\xi_{r'a}(X_{1},X_{2})}
\]
is identifiable. It then follows that 
\[
f(A=a,X,Y)=\frac{f(A=a,X,Y,R=1_{p})}{\pr(R=1_{p}\mid A=a,X_{1},X_{2})}
\]
is identifiable. Therefore, the joint distribution of $(A,X,Y,R)$,
$f(A=a,X,Y)\pr(R=r\mid A=a,X)$, is identifiable. This completes the
proof.

\subsection{Proof of Remark \ref{rmk: bddcomp}}

We first prove that when $X$ and $Y$ are discrete with finite supports,
Assumption \ref{asump:bddcomp} is equivalent to the rank condition
in Proposition \ref{proposition1}.

\begin{proposition}\label{prop:equi}
	
	Suppose that $X$ and $Y$ are discrete, and that $X_{j}\in\{x_{j1},\ldots x_{jJ_{j}}\}$
	for $j=1,\ldots,p$ and $Y\in\{y_{1},\ldots,y_{K}\}$. The bounded
	completeness in $Y$ of $f(A=a,X,Y,R=1_{p})$ is equivalent to the
	condition that $\Theta_{a}$ is of full column rank, for $a=0,1$.
	
\end{proposition}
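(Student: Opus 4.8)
The plan is to show the two conditions are equivalent by expressing both in terms of the matrix $\Theta_a$ acting on finite-dimensional vectors. First I would fix $a$ and observe that, since $X$ has finite support, any bounded function $g(X)$ is automatically in $\mathcal{L}_1$-metric, and the integral $\int g(X) f(A=a,X,Y,R=1_p)\,\de\nu(X)$ is just the finite sum $\sum_X g(X) f(A=a,X,Y,R=1_p)$, which for each value $y_k$ of $Y$ equals the $k$-th coordinate of $\Theta_a v$, where $v$ is the $q$-vector with entries $g(X)$ evaluated at the $q=J_1\times\cdots\times J_p$ possible values of $X$. So the statement ``$\int g(X) f(A=a,X,Y,R=1_p)\,\de\nu(X)=0$ for all $y$'' is exactly ``$\Theta_a v = 0$''.

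With that translation in hand, bounded completeness in $Y$ says: $\Theta_a v = 0$ implies $v=0$ for every $v$ arising from a bounded $g$. Since $X$ has finite support, the map $g\mapsto v$ is a bijection onto all of $\mathbb{R}^q$ (every vector in $\mathbb{R}^q$ is bounded and corresponds to some $g$), so this is precisely the statement that $\Theta_a$ has trivial null space, i.e.\ $\mathrm{Rank}(\Theta_a)=q$, i.e.\ $\Theta_a$ has full column rank. Conversely, if $\Theta_a$ has full column rank then $\Theta_a v=0$ forces $v=0$, hence $g(X)=0$ at every support point, hence $g(X)=0$ almost surely; so the rank condition implies bounded completeness. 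I would write this out cleanly for both directions, noting that ``$g=0$ almost surely'' and ``$g=0$ at every support point of $X$'' coincide here because $\nu$ is the counting measure restricted to the support.

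I do not anticipate a serious obstacle: the only point needing a little care is making explicit that the $\mathcal{L}_1$-boundedness restriction in Definition \ref{def:bddcomp} is vacuous in the finite-support case, so that the quantifier in bounded completeness ranges over all of $\mathbb{R}^q$ and not a proper subset — this is what makes the equivalence with full column rank exact rather than merely one-directional. After establishing this proposition, the remaining claim in Remark \ref{rmk: bddcomp}, that Assumption \ref{asump:bddcomp} implies Assumption \ref{asump-overlap}, follows by taking $g(X)=I(e(X)=0)$ or $g(X)=I(e(X)=1)$: if $e(X)\in\{0,1\}$ on a set of positive probability, then $f(A=a,X,Y,R=1_p)=0$ there for one of $a=0$ or $a=1$ (using also Assumption \ref{asump:Pos} to pass from $A=a$ to $A=a,R=1_p$), so this nonzero bounded $g$ integrates against $f(A=a,X,Y,R=1_p)$ to zero, contradicting bounded completeness; hence $0<e(X)<1$ almost surely, and the constants $c_1,c_2$ exist by a compactness or essential-range argument.
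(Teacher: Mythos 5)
Your proof is correct and follows essentially the same route as the paper's: both reduce the integral to the finite linear system $\Theta_a v = 0$, where $v$ collects the values of $g$ on the $q$ support points, and identify bounded completeness with triviality of the null space of $\Theta_a$. You are in fact slightly more careful than the paper in making explicit that the $\mathcal{L}_1$-boundedness restriction in Definition \ref{def:bddcomp} is vacuous on a finite support, which is precisely what makes the equivalence exact in both directions rather than only one.
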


\begin{proof} Suppose that $\int g(X)f(A=a,X,Y,R=1_{p})\de\nu(X)=0$
	for all $Y=y_{1},\ldots,y_{K}$. For discrete $X$, the integral equation
	(\ref{eq:m2}) reduces to 
	\begin{equation}
	\Theta_{a}\left(\begin{array}{c}
	g(x_{11},\ldots,x_{p1})\\
	\vdots\\
	g(x_{1J_{1}},\ldots,x_{pJ_{p}})
	\end{array}\right)_{(J_{1}\times\cdots\times J_{p})\times1}=\left(\begin{array}{c}
	0\\
	\vdots\\
	0
	\end{array}\right)_{K\times1}.\label{eq:linear sys}
	\end{equation}
	If $\Theta_{a}$ is of full column rank, then the solution to the
	linear system (\ref{eq:linear sys}) is zero, that is, $g(X)=0$,
	which indicates that $f(A=a,X,Y,R=1_{p})$ is bounded complete in
	$Y$.
	
	On the other hand, suppose $f(A=a,X,Y,R=1_{p})$ is bounded complete
	in $Y$. Therefore, $\int g(X)f(A=a,X,Y,R=1_{p})\de\nu(X)=0$ for
	all $Y=y_{1},\ldots,y_{K}$ implies $g(X)=0$. In this case, the only
	solution to (\ref{eq:linear sys}) is
	
	\[
	\left(\begin{array}{c}
	g(x_{11},\ldots,x_{p1})\\
	\vdots\\
	g(x_{1J_{1}},\ldots,x_{pJ_{p}})
	\end{array}\right)_{(J_{1}\times\cdots\times J_{p})\times1}=\left(\begin{array}{c}
	0\\
	\vdots\\
	0
	\end{array}\right)_{(J_{1}\times\cdots\times J_{p})\times1}.
	\]
	Therefore, $\Theta_{a}$ is of full column rank. This completes the
	proof. \end{proof}

We then prove that Assumption \ref{asump:bddcomp} implies Assumption
\ref{asump-overlap}.

\begin{proof} For the discrete $X$ and $Y$, suppose that there
	exists $x^{*}$ with $\pr(X=x^{*})>0$, such that $e(x^{*})=\pr(A=1\mid X=x^{*})=0$.
	Then, 
	\[
	f(A=1,X=x^{*},Y,R=1_{p})=e(x^{*})f(X=x^{*},Y)\pr(R=1_{p}\mid X=x^{*},Y,A=1)=0,
	\]
	which indicates that one column in $\Theta_{1}$ is zero. Therefore,
	$\Theta_{1}$ is not of full column rank, violating the bounded completeness
	condition.
	
	For the continuous $X$ and $Y$, suppose that there exists a subset
	$\mathcal{X}^{*}$ with $\pr(x^{*}\in\mathcal{X}^{*})>0$, such that
	$e(x^{*})=\pr(A=1\mid X=x^{*})=0$ for any $x^{*}\in\mathcal{X}^{*}$.
	Following the same derivation as for the discrete case, we have, for
	any $x^{*}\in\mathcal{X}^{*}$, that $f(A=1,X=x^{*},Y,R=1_{p})=0$.
	Then, $f(A=1,X,Y,R=1_{p})$ is not bounded complete in $Y$. To see
	this, suppose $\int g(X)f(A=1,X,Y,R=1_{p})\de\nu(X)=0$ for any $Y$,
	we can let $g(X)$ be zero outside of $\mathcal{X}^{*}$ but non-zero
	inside of $\mathcal{X}^{*}$, violating the bounded completeness condition.
\end{proof}

\section{More details for the nonparametric estimation of $\tau$\label{sec:Nonparametric-estimation-1}}

\subsection{Regularization of series estimators\label{sec:Regularization-of-series}}

Although we can use other regularization techniques to solve the ill-conditioned
inverse problem such as Tikhonov's regularization \citep{darolles2011nonparametric}
and a penalized sieve minimum distance criterion \citep{chen2015sieve},
we follow \citet{newey2003instrumental} to restrict $\xi_{ra}(X)$
and its estimator $\hat{\xi}_{ra}(X)$ to belong to a compact space.
Because the inverse of integration restricted to a compact space is
continuous, this regularization turns the problem to be well-posed.

We now describe the compact space and its norm. Recall that $p$ is
the dimension of $X$. \textcolor{black}{For any function $g(X)$,
	denote 
	\[
	D^{\lambda}g(X)=\frac{\partial^{\lambda_{1}}}{\partial x_{1}^{\lambda_{1}}}\cdots\frac{\partial^{\lambda_{p}}}{\partial x_{p}^{\lambda_{p}}}g(X),
	\]
	and $|\lambda|=\sum_{l=1}^{p}\lambda_{l}$ gives the order of the
	derivative. In particular, the zero order derivative is the function
	itself; that is, $D^{0}g(X)=g(X)$.} For $H(X)=\{h_{1}(X),\ldots,h_{J}(X)\}$,
\textcolor{black}{we define $D^{\lambda}H(X)=\{D^{\lambda}h_{1}(X),\ldots,D^{\lambda}h_{J}(X)\}^{\T}$.
}Let $\lambda=(\lambda_{1},\ldots,\lambda_{p})^{\T}$ be a $p$-vector
with non-negative integers as components. For $m>0$, $m_{0},\delta_{0}>p/2$,
and $p/2<\delta<\delta_{0}$, consider the following functional space
\begin{equation}
\mathcal{G}_{m,m_{0},\delta,B}=\left\{ g(X):\sum_{|\lambda|\leq m+m_{0}}\int\{D^{\lambda}g(\tilde{X})\}^{2}(1+\tilde{X}^{\T}\tilde{X})^{\delta_{0}}dx\leq B\right\} ,\label{eq:G}
\end{equation}
where $\tilde{X}$ is the standardized version of $X$ . Consider
the norm 
\[
||g||_{\mathcal{G}}=\max_{|\lambda|\leq m}\sup_{X}|D^{\lambda}g(\tilde{X})|(1+\tilde{X}^{\T}\tilde{X})^{\delta}.
\]
\citet{gallant1987semi} showed that the closure of $\mathcal{G}_{m,m_{0},\delta_{0},B}$
with respect to the norm $||g||_{\mathcal{G}}$ is compact.

\begin{assumption}[Regularization of the parameter space]\label{assump:regularization}
	Assume that $\xi_{ra}(X)$ and its estimator $\hat{\xi}_{ra}(X)$
	belong to $\mathcal{G}_{m,m_{0},\delta_{0},B}$ in (\ref{eq:G}),
	for any $r$ and $a$.
	
\end{assumption}

\begin{remark}
	
	The regularization is not restrictive for the following reasons. First,
	by the definition of $\mathcal{G}_{m,m_{0},\delta_{0},B}$, the bound
	$B$ requires the functions of $\mathcal{G}_{m,m_{0},\delta_{0},B}$
	to be smooth to a certain degree and the tails of these functions
	to be small. In most applications, we would expect that the functions
	$\xi_{ra}(X)$ to be smooth and mainly concerned with the functional
	forms of $\xi_{ra}(X)$ over some compact region that is large enough
	to cover the region where observations are measured.
	
\end{remark}

Given the Hermite approximation of $\xi_{ra}(X)$, the regularization
in Assumption \ref{assump:regularization} becomes 
\begin{equation}
\beta_{ra}^{\T}\left[\sum_{|\lambda|\leq m+m_{0}}\int\{D^{\lambda}H(\tilde{X})\}\{D^{\lambda}H(\tilde{X})\}^{\T}(1+\tilde{X}^{\T}\tilde{X})^{\delta_{0}}\de\nu(X)\right]\beta_{ra}\leq B,\label{eq:restrict}
\end{equation}
where $\beta_{ra}=(\beta_{ra}^{1},\ldots,\beta_{ra}^{J})^{\T}$. Therefore,
we choose the positive definite matrix $\Lambda$ in the constraint
for regularization in $\mathsection$\ref{sec:Nonparametric-estimation}
to be

\[
\text{\ensuremath{\Lambda}=}\sum_{|\lambda|\leq m+m_{0}}\int\{D^{\lambda}H(\tilde{X})\}\{D^{\lambda}H(\tilde{X})\}^{\T}(1+\tilde{X}^{\T}\tilde{X})^{\delta_{0}}\de\nu(X).
\]

The proposed estimator of $\xi_{ra}$ is $\hat{\xi}_{ra}(X)=\sum_{j=1}^{J}\hat{\beta}_{ra}^{j}h^{j}(\tilde{X}),$
where $\hat{\beta}_{ra}$ minimizes (\ref{eq:obj fctn}) with the
constraint $\beta_{ra}^{\T}\Lambda\beta_{ra}\leq B$.

\subsection{The computational algorithm in $\mathsection$\ref{sec:Nonparametric-estimation}
	and an example\label{subsec:Computation-algorithm} }

We summarize the computation algorithm for $\tau$ as follows.

\begin{step}\label{step1} Obtain nonparametric estimators of $\tau(X)$, 
	$f(X\mid A=a,R=1_{p})$, $f(X_{\obs},Y\mid A=a,R=r)$, for all $r$
	and $a$. Specifically, we use 
	\begin{equation}
	\hat{\tau}(X)=\hat{E}(Y\mid A=1,X,R=1_{p})-\hat{E}(Y\mid A=0,X,R=1_{p}),\label{eq:tauhat-spline}
	\end{equation}
	where $\hat{E}(Y\mid A=a,X,R=1_{p})$ is a smoothing spline estimator
	of $E(Y\mid A=a,R=1_{p})$, for $a=0,1$. Also let $\hat{f}(X\mid A=a,R=1_{p})$
	and $\hat{f}(X_{\obs},Y\mid A=a,R=r)$ be the kernel density estimators
	of $f(X\mid A=a,R=1_{p})$ and $f(X_{\obs},Y\mid A=a,R=r)$, respectively.
	
\end{step}

\begin{step}\label{step2} Obtain a series estimator of $\xi_{ra}(X)$
	using the Hermite polynomials, $\hat{\xi}_{ra}(X)\approx\sum_{j=1}^{J}\hat{\beta}_{ra}^{j}h^{j}(\tilde{X})$,
	where $ (\hat{\beta}_{ra}^{1},\ldots,\hat{\beta}_{ra}^{J})^{\T}$
	minimizes (\ref{eq:obj fctn}) with the constraint $\beta_{ra}^{\T}\Lambda\beta_{ra}\leq B$.
	
\end{step}

\begin{step}\label{step3} Estimate the probabilities $\pr(R=1_{p}\mid A=a,X)$
	by $\widehat{\pr}(R=1_{p}\mid A=a,X)=\left\{ 1+\sum_{r\neq1_{p}}\hat{\xi}_{ra}(X)\right\} ^{-1}$.
	
\end{step}

\begin{step}\label{step4} Estimate $\tau$ by (\ref{eq:tauhat})
	using a numerical approximation.
	
\end{step}

For illustration of the proposed computational algorithm, we provide
an example with a scalar $X$, which is subject to the outcome-independent
missingness. In this case, $R\in\mathcal{R}=\{0,1\}$.

\begin{example} In Step \ref{step1}, obtain a nonparametric estimator
	of $\tau(X)$ as 
	\[
	\hat{\tau}(X)=\hat{E}(Y\mid A=1,X,R=1)-\hat{E}(Y\mid A=0,X,R=1),
	\]
	where $\hat{E}(Y\mid A=a,X,R=1)$ is a smoothing spline estimator
	of $E(Y\mid A=a,X,R=1)$, for $a=0,1$. Also let 
	\[
	\hat{f}(X\mid A=a,R=1),\quad\hat{f}(Y\mid A=a,R=0),\quad\hat{f}(Y,R=0\mid A=a),\quad\hat{f}(Y,R=1\mid A=a)
	\]
	be the kernel density estimators of 
	\[
	f(X\mid A=a,R=1),\quad f(Y\mid A=a,R=0),\quad f(Y,R=0\mid A=a),\quad f(Y,R=1\mid A=a).
	\]
	
	In Step \ref{step2}, (\ref{eq:m3}) becomes 
	\[
	\hat{f}(Y,R=0\mid A=a)=\int\xi_{0a}(X)\hat{f}(X\mid A=a,Y,R=1)\de\nu(X)\times\hat{f}(Y,R=1\mid A=a).
	\]
	Let $\hat{E}\{h^{j}(\tilde{X})\mid y,A=a,R=1\}$ be a nonparametric
	estimator of $E\{h^{j}(\tilde{X})\mid y,A=a,R=1\}$. For unit $i$,
	evaluate this nonparametric estimator at $Y_{i}$, we have $\hat{H}_{0a}^{j}=\hat{E}\{h^{j}(\tilde{X})\mid A_{i}=a,Y_{i},R_{i}=1\}$.
	We obtain a series estimator of $\xi_{0a}(X)$ using the Hermite polynomials,
	$\hat{\xi}_{0a}(X)\approx\sum_{j=1}^{J}\hat{\beta}_{0a}^{j}h^{j}(\tilde{X})$,
	where the $\hat{\beta}_{0a}^{j}$'s minimize the objective function 
	\begin{equation}
	\sum_{i=1}^{n}\left\{ \hat{f}(Y_{i},R_{i}=0\mid A_{i}=a)\vphantom{\sum_{j}^{J}}-\sum_{j=1}^{J}\beta_{0a}^{j}\hat{H}_{0a}^{j}\hat{f}(Y_{i},R_{i}=1\mid A_{i}=a)\right\} ^{2},\label{eq:obj fctn-S}
	\end{equation}
	subject to the constraint $\beta_{0a}^{\T}\Lambda\beta_{0a}\leq B$.
	
	In Step \ref{step3}, estimate the probability $\pr(R=1\mid A=a,X)$
	by $\widehat{\pr}(R=1\mid A=a,X)=\{1+\hat{\xi}_{0a}(X)\}^{-1}$.
	
	In Step \ref{step4}, obtain the estimator of $\tau$ by 
	\begin{equation}
	\sum_{a=0}^{1}\widehat{\pr}(A=a,R=1)\int\hat{\tau}(x)\frac{\hat{f}(X\mid A=a,R=1)}{\widehat{\pr}(R=1\mid A=a,X)}\de\nu(X),\label{eq:tauhat-1}
	\end{equation}
	using a numerical approximation.
	
\end{example}

\subsection{Choice of tuning parameters}

The proposed estimator depends on several tuning parameters: the number
of the Hermite polynomial functions $J$, the bound $B$ for regularization,
and tuning parameters in the kernel-based estimators. On the one hand,
$J$ and $B$ should be large enough to ensure that the series estimator
approximates the true underlying function well. On the other hand,
$J$ and $B$ should not be too large to control the variance
of our estimator. \citet{chen2012estimation} and \citet{chen2015optimal} investigated the general
requirements for these tuning parameters in terms of the growing rate
with the sample size in the penalized sieve minimum distance estimation.
In practice, we suggest using data-driven methods, such as cross-validation,
to choose these parameters, and conducting sensitivity analysis varying
the tuning parameters.

%

\section{Asymptotic results for the nonparametric estimation\label{sec:Asymptotic-results}}

We study the consistency of the proposed estimator $\hat{\tau}$ of
$\tau$. The literature has established comprehensive consistency
results for nonparametric estimators and series estimators. For completeness
of our theory, in $\mathsection$\ref{sec1} and $\mathsection$\ref{sec2},
we establish the consistency of the nonparametric estimators in Step
\ref{step1} and the series estimator of $\xi_{ra}(X)$ in Step \ref{step2},
which serve building blocks for deriving the consistency result for
$\hat{\tau}$ in $\mathsection$\ref{sec3}.

\subsection{The consistency of the nonparametric estimators in Step \ref{step1}\label{sec1}}

We assume that the kernel functions and the bandwidth $h_{n}$ satisfy
the following regularity conditions:

\begin{assumption}\label{appendC} \textcolor{black}{(i)$\int_{\R^{p}}K(s)\de s=1$;
		(ii) $||K||_{\infty}=\sup_{x\in\R^{p}}|K(x)|=\kappa<\infty$; (iii)
		$K(\cdot)$ is right continuous; (iv) $\int_{\R^{p}}\Psi_{K}(x)\de x<\infty,$
		where $\Psi_{K}(x)=\sup_{||y||\geq||x||}|K(y)|$, for $x\in\R^{p}$;
		and (v) the kernel function is regular and satisfies the following
		uniform entropy condition. Let $\mathcal{K}$ be the class of functions
		indexed by $x$, 
		\[
		\mathcal{K}=\left\{ K\left(\frac{x-\cdot}{h^{1/p}}\right):h>0,x\in\R^{p}\right\} .
		\]
		Suppose $\mathcal{B}$ is a Borel set in $\R^{p}$, and $Q$ is some
		probability measure on $(\R^{p},\mathcal{B})$. Define $d_{Q}$ to
		be the $L_{2}(Q)$-metric, and $N(\epsilon,\mathcal{K},d_{Q})$ the
		minimal number of balls $\{g:d_{Q}(g,g')<\epsilon\}$ of $d_{Q}$-radius
		$\epsilon$ needed to cover $\mathcal{K}.$ Let $N(\epsilon,\mathcal{K})=\sup_{Q}N(\kappa\epsilon,\mathcal{K},d_{Q})$,
		where the supremum is taken over all probability measures $Q$. For
		some $C>0$ and $\nu>0$, $N(\epsilon,\mathcal{K})\leq C\epsilon^{-\nu}$
		for any $0<\epsilon<1$.}
	
\end{assumption}

\begin{assumption}\label{cond_bw} \textcolor{black}{$h_{n}$ decreases
		to zero, $h_{n}/h_{2n}$ is bounded, $\log(1/h_{n})/\log\log n\rightarrow\infty$
		and $nh_{n}/\log n\rightarrow\infty$, as $n\rightarrow\infty$. }
	
\end{assumption}

\textcolor{black}{\citet{van1996weak} provides sufficient conditions
	for Assumption \ref{appendC} (v).}

\begin{lemma}[Consistency of kernel density estimators]\label{Thm:3 consistency of kernel-based}Let
	$\hat{f}(X\mid A=a,R=1_{p})$ be the kernel density estimator of $f(X\mid A=a,R=1_{p})$,
	where the kernel function satisfies Assumption \ref{appendC}, and
	the bandwidth $h_{n}$ satisfies Assumption \ref{cond_bw}. Suppose
	that the true density function $f(X\mid A=a,R=1_{p})$ is bounded
	and uniformly continuous in $X$, then 
	\begin{equation}
	\lim_{n\rightarrow\infty}\left\Vert \hat{f}(X\mid A=a,R=1_{p})-f(X\mid A=a,R=1_{p})\right\Vert _{\infty}=0\label{eq:thm1}
	\end{equation}
	almost surely.
	
\end{lemma}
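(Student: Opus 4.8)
The plan is to establish the uniform almost sure consistency of the kernel density estimator by appealing to the classical strong uniform convergence theory for kernel estimators, as developed in the empirical process literature; the conditions imposed in Assumptions \ref{appendC} and \ref{cond_bw} are precisely those needed to invoke such a result. First I would reduce to the conditional problem: since $\hat f(X\mid A=a,R=1_p)=\hat f(X,A=a,R=1_p)/\widehat{\pr}(A=a,R=1_p)$, it suffices to show that the joint kernel density estimator $\hat f(X,A=a,R=1_p)$ converges uniformly to $f(X,A=a,R=1_p)$ almost surely, and that $\widehat{\pr}(A=a,R=1_p)\to\pr(A=a,R=1_p)>0$ almost surely by the strong law of large numbers; a ratio of such convergent sequences converges, using the positivity guaranteed by Assumption \ref{asump:Pos} (and Remark \ref{rmk: bddcomp}).

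Next I would handle the joint estimator by the standard bias--variance split. Write $\hat f - f = (\hat f - E\hat f) + (E\hat f - f)$. The deterministic bias term $\|E\hat f - f\|_\infty\to 0$ follows from the uniform continuity and boundedness of $f(X\mid A=a,R=1_p)$ together with $h_n\to 0$ and $\int K = 1$, $\|K\|_\infty<\infty$, $\int \Psi_K<\infty$ (a dominated-convergence / approximate-identity argument). For the stochastic term $\|\hat f - E\hat f\|_\infty$, I would invoke the uniform-in-bandwidth strong law for kernel estimators over VC-type classes: the class $\mathcal K$ in Assumption \ref{appendC}(v) has polynomial uniform covering numbers, so by the Talagrand-type concentration / chaining bounds (e.g.\ the results of \citet{van1996weak} and the empirical-process machinery underlying the stated regularity conditions) one gets $\|\hat f - E\hat f\|_\infty = O\!\big(\sqrt{\log(1/h_n)/(n h_n)}\big)$ almost surely, which tends to zero precisely because $nh_n/\log n\to\infty$ and $\log(1/h_n)/\log\log n\to\infty$ (the latter controls the almost sure, as opposed to in-probability, statement via a Borel--Cantelli blocking argument along a subsequence of bandwidths, using $h_n/h_{2n}$ bounded to interpolate).

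Assembling the pieces: the bias term vanishes by the approximate-identity argument, the variance term vanishes by the uniform strong law, hence $\|\hat f(X,A=a,R=1_p)-f(X,A=a,R=1_p)\|_\infty\to 0$ almost surely; dividing by the (almost surely convergent, bounded away from zero) normalizing probability gives \eqref{eq:thm1}. The main obstacle is the stochastic term: obtaining the \emph{almost sure} uniform rate rather than merely convergence in probability requires the delicate entropy and bandwidth conditions (Assumption \ref{appendC}(v) and the $\log(1/h_n)/\log\log n\to\infty$ condition in Assumption \ref{cond_bw}), and the cleanest route is to cite the relevant uniform-in-bandwidth consistency theorem directly rather than reprove the concentration inequality; everything else is routine.
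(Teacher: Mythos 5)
Your proposal is correct and takes essentially the same route as the paper, which omits the proof and simply refers to \citet{deheuvels2000uniform} and \citet{gine2002rates}; your bias--variance split with an approximate-identity argument for the bias and a Talagrand/VC-class concentration bound for the stochastic term is precisely the machinery underlying those cited results, and the reduction from the conditional to the joint density via the ratio with $\widehat{\pr}(A=a,R=1_p)$ is the standard (implicit) first step.
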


The Nadaraya\textendash Watson estimators of $E(Y\mid A=1,X,R=1_{p})$
and $E(Y\mid A=0,X,R=1_{p})$ are 
\begin{eqnarray}
\hat{E}(Y\mid A=1,X,R=1_{p})
&=&\frac{\sum_{i:R_{i}=1_{p}}A_{i}Y_{i}K\left(\frac{X-X_{i}}{h_{n}^{1/p}}\right)}{\sum_{i:R_{i}=1_{p}}A_{i}K\left(\frac{X-X_{i}}{h_{n}^{1/p}}\right)},\label{NW1}
\\ 
\hat{E}(Y\mid A=0,X,R=1_{p})
&=&\frac{\sum_{i:R_{i}=1_{p}}(1-A_{i})Y_{i}K\left(\frac{X-X_{i}}{h_{n}^{1/p}}\right)}{\sum_{i:R_{i}=1_{p}}(1-A_{i})K\left(\frac{X-X_{i}}{h_{n}^{1/p}}\right)},\label{NW2}
\end{eqnarray}
respectively. In this article, we focus on the Nadaraya\textendash Watson
estimator, but we can also consider other nonparametric estimators,
such as local polynomial estimator.

\textcolor{black}{Let $I$ be a compact subset of $\R^{p}$. For any
	function $\psi:\R^{p}\rightarrow\mathcal{\R}$, define 
	\begin{equation}
	||\psi||_{I}=\sup_{x\in I}|\psi(X)|.\label{eq:define norm}
	\end{equation}
	Also, denote $I^{\epsilon}=\{X\in\mathcal{\R}^{p}:\max_{1\leq i\leq p}|X_{i}|\leq\epsilon\}$. }

\begin{lemma}[Consistency of kernel-based estimators for conditional means]\label{Thm: 5 consistency of kernel-based-1}Suppose
	that the kernel function $K(\cdot)$ in (\ref{NW1}) and (\ref{NW2})
	satisfies Assumption \ref{appendC} with support contained in $[-1/2,1/2]^{p}$,
	and the bandwidth $h_{n}$ satisfies Assumption \ref{cond_bw}.
	
	Suppose that there exists an $\epsilon>0$ such that $f(X\mid A=a,R=1_{p})=\int_{-\infty}^{\infty}f(X,Y\mid A=a,R=1_{p})\de Y$
	is continuous and strictly positive on $I^{\epsilon}$, and that $f(X,Y\mid A=a,R=1_{p})$
	is continuous in $X$ for almost every $Y\in\R$. Suppose further
	that there exists an $M>0$ such that for $X\in I^{\epsilon}$, $|Y|\leq M$
	almost surely. Then, for any $a$, 
	\begin{equation}
	\lim_{n\rightarrow\infty}\left\Vert \hat{E}(Y\mid A=a,X,R=1_{p})-E(Y\mid A=a,X,R=1_{p})\right\Vert _{I}=0\label{eq:thm2}
	\end{equation}
	almost surely.
	
\end{lemma}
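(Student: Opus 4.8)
The plan is to write the Nadaraya--Watson estimator as a ratio of two kernel averages and to control the numerator and denominator separately on $I$. With $K_h(u)=K(u/h^{1/p})$, set
\[
\hat s_a(X)=\frac{1}{nh_n}\sum_{i:R_i=1_p}\mathbf{1}(A_i=a)\,K_{h_n}(X-X_i),\qquad \hat r_a(X)=\frac{1}{nh_n}\sum_{i:R_i=1_p}\mathbf{1}(A_i=a)\,Y_i\,K_{h_n}(X-X_i),
\]
so that $\hat E(Y\mid A=a,X,R=1_p)=\hat r_a(X)/\hat s_a(X)$ by \eqref{NW1}--\eqref{NW2}, with population analogues $s_a(X)=f(X,A=a,R=1_p)$ and $r_a(X)=E\{Y\mathbf{1}(A=a,R=1_p)\mid X\}\,f(X)$, for which $r_a(X)/s_a(X)=E(Y\mid A=a,X,R=1_p)$ whenever $s_a(X)>0$. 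Since $f(X\mid A=a,R=1_p)$ is continuous and strictly positive on $I^\epsilon\supseteq I$, on the compact set $I$ one has $s_a\ge\delta$ for some $\delta>0$ and $|r_a|=|E(Y\mid A=a,X,R=1_p)|\,s_a\le M\|s_a\|_I$. I would dispose of the denominator quickly: writing $\hat s_a=\widehat{\pr}(A=a,R=1_p)\,\hat f(\cdot\mid A=a,R=1_p)$, the strong law and the argument of Lemma \ref{Thm:3 consistency of kernel-based} specialized to the compact set $I$ (on which $s_a$ is bounded and continuous) give $\|\hat s_a-s_a\|_I\to0$ almost surely, hence $\hat s_a\ge\delta/2$ on $I$ for all large $n$, almost surely.

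The numerator is where the work lies. Because $K$ is supported in $[-1/2,1/2]^p$ and $h_n\to0$, for $X\in I$ the summand $Y_iK_{h_n}(X-X_i)$ vanishes unless $X_i$ lies in the $\ell_\infty$-ball of radius $\tfrac12 h_n^{1/p}$ about $X$, which for all large $n$ is contained in $I^\epsilon$, where $|Y_i|\le M$ almost surely; hence, uniformly over $X\in I$ and for all large $n$, $\hat r_a(X)$ coincides with the same average after replacing $Y_i$ by the bounded variable $Y_i\mathbf{1}(|Y_i|\le M)$. So on $I$ the numerator is a kernel average with a bounded multiplier, and I would obtain $\|\hat r_a-E\hat r_a\|_I\to0$ almost surely from the empirical-process argument that underlies Lemma \ref{Thm:3 consistency of kernel-based}: the kernel class in Assumption \ref{appendC}(v) has the stated uniform-entropy bound, multiplying its members by the fixed bounded factor $y\,\mathbf{1}(|y|\le M)$ preserves that bound (cf.\ \citet{van1996weak}), and the bandwidth conditions of Assumption \ref{cond_bw} then deliver the uniform strong law. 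For the bias, the substitution $u=(X-x)/h_n^{1/p}$ gives $E\hat r_a(X)=\int K(u)\,r_a(X-h_n^{1/p}u)\,\de u$ for $X\in I$ and $n$ large; continuity of $f(X,Y\mid A=a,R=1_p)$ in $X$ for almost every $Y$, with the domination $|Y|\le M$ on $I^\epsilon$, makes $r_a$ continuous and hence uniformly continuous on $I^\epsilon$, so $\|E\hat r_a-r_a\|_I\to0$ because $\int K=1$. Combining these two pieces yields $\|\hat r_a-r_a\|_I\to0$ almost surely.

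Finally I would combine numerator and denominator through the identity $\hat r_a/\hat s_a-r_a/s_a=\{(\hat r_a-r_a)s_a+r_a(s_a-\hat s_a)\}/(\hat s_a s_a)$: on the event that $\hat s_a\ge\delta/2$ on $I$ one has $\hat s_a s_a\ge\delta^2/2$, whence
\[
\left\|\frac{\hat r_a}{\hat s_a}-\frac{r_a}{s_a}\right\|_I\le\frac{2}{\delta^2}\Big(\|s_a\|_I\,\|\hat r_a-r_a\|_I+\|r_a\|_I\,\|\hat s_a-s_a\|_I\Big)\xrightarrow{\ \text{a.s.}\ }0,
\]
which is \eqref{eq:thm2}. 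The main obstacle will be the numerator step, specifically the uniform-in-$X$ strong law for the centered kernel sum; the device that tames it is the pairing of the compact support of $K$ with the local boundedness of $Y$ on $I^\epsilon$, which reduces matters to the bounded-weight kernel-density estimator already handled by Lemma \ref{Thm:3 consistency of kernel-based}.
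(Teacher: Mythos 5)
Your argument is correct and is essentially the one the paper itself relies on: the paper omits this proof, deferring to \citet{deheuvels2000uniform} and \citet{gine2002rates}, and those references proceed exactly as you do --- splitting the Nadaraya--Watson ratio into a kernel-weighted numerator and a kernel density denominator, reducing the numerator to a bounded-multiplier kernel average via the compact support of $K$ together with $|Y|\leq M$ on $I^{\epsilon}$, controlling the centered sums uniformly through the entropy condition of Assumption \ref{appendC}(v) and the bandwidth rates of Assumption \ref{cond_bw}, handling the bias by uniform continuity, and concluding with the standard ratio identity on the event that the denominator is bounded away from zero. The only points you pass over lightly --- that multiplying the kernel class by the fixed bounded factor $y\,\mathbf{1}(|y|\leq M)$ preserves the uniform entropy bound, and that continuity of $r_{a}$ requires a dominated-convergence step using the local boundedness of $f(\cdot\mid A=a,R=1_{p})$ --- are standard permanence facts and do not constitute gaps at the level of detail the paper adopts.
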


A large literature has developed consistency of kernel-based estimators.
The proofs of Lemmas \ref{Thm:3 consistency of kernel-based} and
\ref{Thm: 5 consistency of kernel-based-1} are similar to those given
by \citet{deheuvels2000uniform} and \citet{gine2002rates}, and therefore
are omitted. The smoothing spline estimator is asymptotically equivalent
to a kernel-based estimator that employs the so-called spline kernel
\citep{silverman1984spline}. Both spline kernels and Gaussian kernels
satisfy Assumption \ref{appendC} \citep{van1996weak}. Therefore,
by Lemmas \ref{Thm:3 consistency of kernel-based} and \ref{Thm: 5 consistency of kernel-based-1},
the nonparametric estimators in Step \ref{step1} are consistent.

\subsection{The consistency of the series estimator of $\xi_{ra}(X)$ in Step
	\ref{step2}\label{sec2}}

For any $r$ and $a$, $\xi_{ra}(X)$ satisfies the conditional moment
restriction\textcolor{black}{  
	\[
	E\left\{ f(X_{\obs},Y,R=r\mid A=a)-\xi_{ra}(X)f(X,Y,R=1_{p}\mid A=a)\mid A=a,X_{\obs},Y,R=r\right\} =0.
	\]
	We define a generalized residuals with the function of interest $h(X)$
	as 
	\[
	\rho_{ra}(X,Y;h)=f(X_{\obs},Y,R=r\mid A=a)-h(X)f(X,Y,R=1_{p}\mid A=a),
	\]
	the conditional mean function of $\rho_{ra}(X,Y;h)$ given $(A=a,X_{\obs},Y,R=r)$
	as 
	\[
	m_{ra}(X_{\obs},Y;h)=E\{\rho_{ra}(X,Y;h)\mid A=a,X_{\obs},Y,R=r\},
	\]
	and the series least square estimator of the conditional mean function
	as 
	\begin{eqnarray*}
		\hat{m}_{ra}(X_{\obs},Y;h) & = & \hat{f}(X_{\obs},Y,R=r\mid A=a)\\
		&  & -\hat{E}\{h(X)\mid A=a,X_{\obs},Y,R=r\}\hat{f}(X,Y,R=1_{p}\mid A=a).
	\end{eqnarray*}
	Following these definitions, $m_{ra}(X_{\obs},Y;\xi_{ra})=0$ for
	any $r$ and $a$. Let the project of $\xi_{ra}$ onto $\mathcal{H}_{J}$
	be $\prod_{\mathcal{H}_{J}}\xi_{ra}(\cdot)=\sum_{j=1}^{J}\beta_{ra}^{j}h^{j}(\cdot)$
	such that $||\prod_{\mathcal{H}_{J}}\xi_{ra}-\xi_{ra}||_{\infty}=o(1)$. }

To avoid technicality, we assume the following high-level regularity
conditions.

\begin{assumption}\label{appendixB} (i) $E\{||m_{ra}(X_{\obs},Y;\prod_{\mathcal{H}_{J}}\xi_{ra})||_{\mathcal{G}}^{2}\}=o(1)$;
	(ii) $n^{-1}\sum_{i=1}^{n}||m_{ra}(X_{\obs,i},Y_{i};\prod_{\mathcal{H}_{J}}\xi_{ra})||_{\mathcal{G}}^{2}$
	$\leq c_{0}E||m_{ra}(X_{\obs},Y;\prod_{\mathcal{H}_{J}}\xi_{ra})||_{\mathcal{G}}^{2}+o_{p}(1)$
	and a finite constant $c_{0}>0$; (iii) $n^{-1}\sum_{i=1}^{n}||\hat{m}_{ra}(X_{\obs,i},Y_{i};h)||_{\mathcal{G}}^{2}\geq c_{1}E||m_{ra}(X_{\obs},Y;h)||_{\mathcal{G}}^{2}-o_{p}(1)$
	uniformly for $h$ over $\mathcal{H}_{J}$ and a finite constant $c_{1}>0$.
	
\end{assumption}

Assumption \ref{appendixB} (i) holds if $E\{||m_{ra}(X_{\obs},Y;h)||_{\mathcal{G}}^{2}\}$
is continuous at $h=\xi_{ra}$ under $||\cdot||_{\infty}$. Assumption
\ref{appendixB} (ii) and (iii) are sample criteria to regularize
the asymptotic behavior of the series estimator of $m_{ra}(X_{\obs,i},Y_{i};h)$.
\citet{chen2012estimation} provided sufficient conditions for Assumption
\ref{appendixB}.

\begin{lemma}[Consistency of $\hat{\xi}_{ra}$]\label{Thm:4}\textcolor{black}{Under
		Assumptions \ref{asump-ignorable}, \ref{asump:Pos}, \ref{asump:OIMissing}
		and Assumption \ref{appendixB}, the series estimator $\hat{\xi}_{ra}(X)=\sum_{j=1}^{J}\hat{\beta}_{ra}^{j}h^{j}(\tilde{X})$
		is consistent for $\xi_{ra}(X)$ in the sense that $||\hat{\xi}_{ra}-\xi_{ra}||_{\infty}=o_{p}(1)$
		as $J\rightarrow\infty$ and $n\rightarrow\infty$. }
	
\end{lemma}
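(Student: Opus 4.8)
The plan is to treat $\hat{\xi}_{ra}$ as an approximate minimizer of a sample criterion over a \emph{compact} sieve and then apply the standard argmin--consistency argument, after verifying three ingredients: the population criterion is uniquely and well-separatedly minimized at $\xi_{ra}$; the Hermite sieve approximates both $\xi_{ra}$ and the minimizing value; and the sample criterion converges to the population criterion uniformly over the sieve. For the first, set $Q_{ra}(h)=E\{m_{ra}(X_{\obs},Y;h)^{2}\}$, regarded as a functional on the space $\mathcal{G}=\mathcal{G}_{m,m_{0},\delta_{0},B}$, whose closure under $||\cdot||_{\mathcal{G}}$ is compact by \citet{gallant1987semi}. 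Since $m_{ra}(X_{\obs},Y;\xi_{ra})=0$ almost surely, $Q_{ra}(\xi_{ra})=0$; and by Theorem \ref{thm2}, $\xi_{ra}$ is the unique solution of the integral equation \eqref{eq:m2} among $\mathcal{L}_{1}$-bounded functions, hence the unique zero of $Q_{ra}$ on $\mathcal{G}$. Continuity of $Q_{ra}$ in $||\cdot||_{\mathcal{G}}$ (it is quadratic in $h$, and $m_{ra}$ is bounded-linear in $h$) together with compactness then yields well-separatedness: for every $\epsilon>0$, $\inf\{Q_{ra}(h):h\in\mathcal{G},\ ||h-\xi_{ra}||_{\mathcal{G}}\geq\epsilon\}>0$.

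Next, the Hermite basis $\mathcal{H}_{J}$ is dense in $\mathcal{G}$, so the projection $\prod_{\mathcal{H}_{J}}\xi_{ra}=\sum_{j=1}^{J}\beta_{ra}^{j}h^{j}$ obeys $||\prod_{\mathcal{H}_{J}}\xi_{ra}-\xi_{ra}||_{\infty}=o(1)$, and, because $\xi_{ra}\in\mathcal{G}$, its coefficient vector satisfies the regularization constraint $\beta_{ra}^{\T}\Lambda\beta_{ra}\leq B$ for $J$ large (a routine slackening of the bound covers the boundary case). Thus $\prod_{\mathcal{H}_{J}}\xi_{ra}$ is feasible for the minimization defining $\hat{\xi}_{ra}$, so $\hat{Q}_{ra}(\hat{\xi}_{ra})\leq\hat{Q}_{ra}(\prod_{\mathcal{H}_{J}}\xi_{ra})$, where $\hat{Q}_{ra}(h)=n^{-1}\sum_{i=1}^{n}\hat{m}_{ra}(X_{\obs,i},Y_{i};h)^{2}$ is the normalized version of the objective \eqref{eq:obj fctn}. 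Here Assumption \ref{appendixB}(i)--(ii), combined with the uniform control below, will give $\hat{Q}_{ra}(\prod_{\mathcal{H}_{J}}\xi_{ra})=o_{p}(1)$.

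For the uniform control, the difference between $\hat{Q}_{ra}(h)$ and the oracle criterion $n^{-1}\sum_{i}m_{ra}(X_{\obs,i},Y_{i};h)^{2}$ is bounded, uniformly over $h\in\mathcal{H}_{J}\cap\mathcal{G}$, by products of $||\hat{f}-f||_{\infty}$ and $||\hat{E}\{h^{j}(\tilde{X})\mid\cdot\}-E\{h^{j}(\tilde{X})\mid\cdot\}||_{\infty}$ with the $\mathcal{L}_{1}$-bounded densities and Hermite functions; these sup-norm errors vanish by Lemmas \ref{Thm:3 consistency of kernel-based} and \ref{Thm: 5 consistency of kernel-based-1} (under Assumptions \ref{appendC}--\ref{cond_bw}), while boundedness of the coefficients via $\Lambda$ and $B$ keeps the bound uniform over the sieve. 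The oracle criterion is in turn sandwiched between constant multiples of $Q_{ra}(h)$ up to $o_{p}(1)$ by Assumption \ref{appendixB}(ii)--(iii). Combining these with Step 2, $c_{1}Q_{ra}(\hat{\xi}_{ra})\leq\hat{Q}_{ra}(\hat{\xi}_{ra})+o_{p}(1)\leq\hat{Q}_{ra}(\prod_{\mathcal{H}_{J}}\xi_{ra})+o_{p}(1)=o_{p}(1)$, so $Q_{ra}(\hat{\xi}_{ra})=o_{p}(1)$, and well-separatedness from Step 1 forces $||\hat{\xi}_{ra}-\xi_{ra}||_{\mathcal{G}}=o_{p}(1)$. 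Finally, taking $\lambda=0$ in the definition of $||\cdot||_{\mathcal{G}}$ and using $(1+\tilde{X}^{\T}\tilde{X})^{\delta}\geq1$ gives $||g||_{\infty}\leq||g||_{\mathcal{G}}$ for any $g$, whence $||\hat{\xi}_{ra}-\xi_{ra}||_{\infty}=o_{p}(1)$, as claimed.

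The crux is the uniform control in the third step: one must handle the sample criterion uniformly over the sieve while the conditioning variables and the conditional expectations inside $m_{ra}$ are themselves replaced by first-stage kernel estimates, and simultaneously track the joint limit $J\to\infty$, $n\to\infty$ in a problem whose ill-posedness is tamed only by the compactness restriction to $\mathcal{G}$. Assumption \ref{appendixB} is the high-level device that absorbs the sieve-side bookkeeping; the genuine work is the uniform negligibility of the plug-in first-stage errors, which relies on the sup-norm consistency in Lemmas \ref{Thm:3 consistency of kernel-based}--\ref{Thm: 5 consistency of kernel-based-1} together with the boundedness of the Hermite functions over the region where the relevant densities are bounded away from zero. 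A lesser subtlety is the feasibility of $\prod_{\mathcal{H}_{J}}\xi_{ra}$ under $\beta^{\T}\Lambda\beta\leq B$, handled by the regularization assumption or the slackening argument noted above.
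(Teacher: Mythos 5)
Your argument is correct and is essentially the route the paper takes: the paper omits this proof entirely, deferring to the sieve minimum-distance consistency argument of \citet{chen2012estimation}, which is exactly the three-step scheme you lay out (a well-separated population minimizer on the compact space $\mathcal{G}_{m,m_{0},\delta_{0},B}$, feasibility of the Hermite projection under the constraint $\beta_{ra}^{\T}\Lambda\beta_{ra}\leq B$, and uniform convergence of the sample criterion absorbed by the high-level Assumption \ref{appendixB} together with the first-stage sup-norm rates from Lemmas \ref{Thm:3 consistency of kernel-based} and \ref{Thm: 5 consistency of kernel-based-1}). The only point worth flagging is that uniqueness of the population minimizer rests on the bounded completeness condition (Assumption \ref{asump:bddcomp}) via Theorem \ref{thm2}, which you correctly invoke even though it is not listed among the lemma's stated hypotheses.
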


\citet{chen2012estimation} provided a proof for Lemma \ref{Thm:4}
in the context of estimation of nonparametric conditional moment models.
Our proof for Lemma \ref{Thm:4} is similar, and therefore omitted.

\subsection{The consistency of the proposal estimator of $\tau$ in Step \ref{step4}\label{sec3}}

Let $||X||$ be the Euclidean norm for $X$. Denote $I_{K}=\{X:||X||>K\}$
for a constant $K$, and $I_{K}^{c}$ to be the complement set of
$I_{K}$.

\begin{theorem}[Consistency of $\hat{\tau}$]\label{Thm:main} Suppose
	that the assumptions in Theorem \ref{Thm:tau} and Lemmas \ref{Thm:3 consistency of kernel-based}\textendash \ref{Thm:4}
	hold. Suppose further that for some $B>0$, $\hat{\tau}(X)$ and $\tau(X)$
	are uniformly bounded for $X\in I_{B}$, and that 
	\begin{equation}
	\int_{I_{K}^{c}}\frac{f(X\mid A=a,R=1_{p})}{\pr(R=1_{p}\mid A=a,X)}\de\nu(X)\rightarrow0,\label{eq:tail prop}
	\end{equation}
	as $K\rightarrow\infty$. Then, the nonparametric estimator $\hat{\tau}$
	resulting from (\ref{eq:tauhat}) is consistent for $\tau$.
	
\end{theorem}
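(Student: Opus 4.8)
The plan is to prove consistency of $\hat\tau$ by decomposing the difference $\hat\tau - \tau$ into pieces that each correspond to the error in one of the plug-in ingredients, and then invoking the consistency results already available: Lemma \ref{Thm:3 consistency of kernel-based} and Lemma \ref{Thm: 5 consistency of kernel-based-1} for the kernel-based estimators in Step \ref{step1}, Lemma \ref{Thm:4} for the series estimator $\hat\xi_{ra}$ in Step \ref{step2}, and the identification formula \eqref{eq::ate} from Theorem \ref{Thm:tau} as the target. First I would write
\[
\hat\tau = \sum_{a=0}^{1}\widehat{\pr}(A=a,R=1_{p})\int\hat\tau(X)\frac{\hat f(X\mid A=a,R=1_{p})}{\widehat{\pr}(R=1_{p}\mid A=a,X)}\de\nu(X),
\]
compare it termwise to the population expression in \eqref{eq::ate}, and use a telescoping argument: replace $\widehat{\pr}(A=a,R=1_{p})$ by its population value (consistency here is elementary, it is a sample proportion), then replace $\hat f(X\mid A=a,R=1_{p})$, then $\hat\tau(X)$, then $\widehat{\pr}(R=1_{p}\mid A=a,X) = \{1+\sum_{r\neq 1_p}\hat\xi_{ra}(X)\}^{-1}$, controlling each replacement error separately.

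The key steps in order are: (i) establish that the integrand is uniformly bounded and that the denominators $\widehat{\pr}(R=1_{p}\mid A=a,X)$ are bounded away from zero on the relevant region — this follows from Assumption \ref{asump:Pos} at the population level together with the uniform consistency $\|\hat\xi_{ra}-\xi_{ra}\|_\infty = o_p(1)$ from Lemma \ref{Thm:4}, which transfers the lower bound to $\widehat{\pr}(R=1_{p}\mid A=a,X)$ with probability tending to one; (ii) split the integral over $I_K$ and $I_K^c$ for large $K$, handling the tail $I_K^c$ by the hypothesis \eqref{eq:tail prop} (and its empirical analog, obtained from Lemma \ref{Thm:3 consistency of kernel-based}) so that the tail contribution is uniformly negligible, and handling $I_K$ by uniform convergence of all the plug-in pieces on the compact set; (iii) on the compact set $I_K$, use the uniform boundedness of $\hat\tau(X),\tau(X)$ (a hypothesis of the theorem), the uniform consistency of $\hat\tau(X)$ via Lemmas \ref{Thm: 5 consistency of kernel-based-1}, the $L_\infty$-consistency of $\hat f(X\mid A=a,R=1_p)$ via Lemma \ref{Thm:3 consistency of kernel-based}, and the $L_\infty$-consistency of $\hat\xi_{ra}$ via Lemma \ref{Thm:4}, combined with the continuity of the maps $(t,g,\xi)\mapsto \int t\, g/(1+\sum\xi)$ when restricted to bounded arguments with denominators bounded below; (iv) assemble the termwise bounds, let $n\to\infty$ and then $K\to\infty$, to conclude $\hat\tau \to \tau$.

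The main obstacle I expect is step (ii): controlling the behavior of the estimator on the non-compact tail $I_K^c$. The regularization constrains $\hat\xi_{ra}$ to a compact function space, but the integral $\int \hat\tau(X)\hat f(X\mid A=a,R=1_p)/\widehat{\pr}(R=1_{p}\mid A=a,X)\,\de\nu(X)$ still runs over all of $\mathbb{R}^p$, and the kernel density estimate $\hat f$ need not have uniformly integrable tails unless one argues carefully. The clean way around this is to show that $\int_{I_K^c}\hat f(X\mid A=a,R=1_p)\de\nu(X)$ is small uniformly in $n$ with high probability — which follows because $\int \hat f = 1$ and, by Lemma \ref{Thm:3 consistency of kernel-based} together with \eqref{eq:tail prop} and the lower bound on $\widehat{\pr}(R=1_{p}\mid A=a,X)$, the mass that $\hat f$ places far out is controlled by that of $f$ — and then to use the uniform boundedness of $\hat\tau(X)$ on $I_B$ (hypothesized) so the tail integrand is dominated. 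Everything else is a routine "uniform convergence plus continuous mapping" argument; the care is entirely in making the tail and the denominator-positivity arguments rigorous and in the right order ($n\to\infty$ first, then $K\to\infty$).
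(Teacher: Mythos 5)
Your proposal is correct and follows essentially the same route as the paper's own proof: decompose $\hat\tau-\tau$ over a compact set $I_K$ and its complement, invoke the uniform ($\mathcal{L}_\infty$) consistency of the kernel estimators (Lemmas \ref{Thm:3 consistency of kernel-based}--\ref{Thm: 5 consistency of kernel-based-1}) and of $\hat\xi_{ra}$ (Lemma \ref{Thm:4}) on $I_K$, control the tail via \eqref{eq:tail prop} and the assumed uniform boundedness of $\hat\tau(X)$ and $\tau(X)$, and then send $n\to\infty$ for a suitably large $K$. If anything, your explicit treatment of the tail mass of $\hat f$ on $I_K^c$ and of the lower bound on $\widehat{\pr}(R=1_p\mid A=a,X)$ is more careful than the paper's, which handles these points only implicitly.
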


The proposed estimator $\hat{\tau}$ is a linear functional of $\hat{\tau}(\cdot)$,
$\hat{f}(\cdot\mid A=a,R=1_{p})$, and $\hat{\xi}_{ra}(\cdot)$. A
large literature has established the root-$n$ asymptotic normality
and the consistent variance estimation for plug-in series estimators
of functionals; see, for example, \citet{newey1997convergence}, \citet{shen1997methods},
\citet{chen1998sieve}, \citet{li2007nonparametric}, \citet{chen2007large},
\citet{chen2009efficient}, \citet{chen2012estimation}, and \citet{chen2014sieve}.
Alternatively, \citet{chen2015sieve}\textcolor{black}{{} provided
	Wald and quasi-likelihood ratio inference results for the general
	models in \citet{chen2012estimation}, including series two stage
	least squares as an example.} A relatively simple approach is to treat
the nonparametric estimators as if they were parametric given the
fixed tuning parameters, so that there is only a finite number of
parameters. From this point of view, we can use standard approaches
for variance estimation under parametric models. This approach is
asymptotically valid for nonparametric series regression; see, for
example, \citet{newey1997convergence}. In the light of treating the
nonparametric estimators as if they were parametric, one might expect
the nonparametric bootstrap to work for our estimator. For all bootstrap
samples, we use the same tuning parameters, such as the smoothing
parameter in the smoothing splines and the bandwidth in the kernel
density estimator. In our simulation study, inference based on the
above bootstrap is promising. However, it is a difficult task
(if it is possible) to prove that the bootstrap is consistent which
is beyond the scope of this article. Recent work has shown
that it does work for some nonparametric instrumental variable series
estimators \citep{horowitz2007asymptotic}. 

\subsection{Proof of Theorem \ref{Thm:main}}

By Lemmas \ref{Thm:3 consistency of kernel-based} and \ref{Thm: 5 consistency of kernel-based-1},
\begin{equation}
\lim_{n\rightarrow\infty}\left|\left|\frac{\hat{f}(X\mid A=a,R=1_{p})}{\widehat{\pr}(R=1_{p}\mid A=a,X)}-\frac{f(X\mid A=a,R=1_{p})}{\pr(R=1_{p}\mid A=a,X)}\right|\right|_{\infty}=0\label{eq:p1}
\end{equation}
almost surely. Since $\hat{\tau}(X)$ and $\tau(X)$ are uniformly
bounded in $I_{K}$ for $K>B$, together with (\ref{eq:tail prop})
and (\ref{eq:p1}), for any $\epsilon$, there exists $K_{2}>0$,
such that for any $K>K_{2}$,{\small{}{} 
	\begin{multline}
	\lim_{n\rightarrow\infty}\pr\left[\left|\int_{I_{K}^{c}}\hat{\tau}(X)\left\{ \frac{\hat{f}(X\mid A=a,R=1_{p})}{\widehat{\pr}(R=1_{p}\mid A=a,X)}-\frac{f(X\mid A=a,R=1_{p})}{\pr(R=1_{p}\mid A=a,X)}\right\} \de\nu(X)\right|>\frac{\epsilon}{4}\right]<\frac{\epsilon}{4},\label{eq:pp2}
	\end{multline}
}and 
\begin{equation}
\lim_{n\rightarrow\infty}\left|\int_{I_{K}^{c}}\left\{ \hat{\tau}(X)-\tau(X)\right\} \frac{f(X\mid A=a,R=1_{p})}{\pr(R=1_{p}\mid A=a,X)}\de\nu(X)\right|<\frac{\epsilon}{4}.\label{eq:pp3}
\end{equation}

By Theorem \ref{Thm:4}, for any $K$,{\small{}{} 
	\begin{equation}
	\lim_{n\rightarrow\infty}\left|\left|\hat{\tau}(X,R=1_{p})\frac{\hat{f}(X\mid A=a,R=1_{p})}{\widehat{\pr}(R=1_{p}\mid A=a,X)}-\tau(X,R=1_{p})\frac{f(X\mid A=a,R=1_{p})}{\pr(R=1_{p}\mid A=a,X)}\right|\right|_{I_{K}}=0\label{eq:p2}
	\end{equation}
}almost surely, where $||\cdot||_{I}$ is defined in (\ref{eq:define norm}).
Therefore, for any $\epsilon$, by (\ref{eq:p2}), we choose $K_{1}$
such that for any $K>K_{1}$, 
\begin{multline}
\lim_{n\rightarrow\infty}\pr\left\{ \left|\int_{I_{K_{1}}}\hat{\tau}(X,R=1_{p})\frac{\hat{f}(X\mid A=a,R=1_{p})}{\widehat{\pr}(R=1_{p}\mid A=a,X)}\de\nu(X)\right.\right.\\
\left.\left.\quad\quad\quad\quad\quad-\int_{I_{K_{1}}}\tau(X,R=1_{p})\frac{f(X\mid A=a,R=1_{p})}{\pr(R=1_{p}\mid A=a,X)}\de\nu(X)\right|>\frac{\epsilon}{2}\right\} <\frac{\epsilon}{2}.\label{eq:pp1}
\end{multline}

Combing (\ref{eq:pp2}), (\ref{eq:pp3}) and (\ref{eq:pp1}), for
any $\epsilon>0$, we choose $K>\max(K_{1},K_{2})$, 
\begin{eqnarray*}
	&  & \lim_{n\rightarrow\infty}\pr(|\hat{\tau}-\tau|>\epsilon)\\
	& = & \lim_{n\rightarrow\infty}\pr\left\{ \left|\int\hat{\tau}(X)\frac{\hat{f}(X\mid A=a,R=1_{p})}{\widehat{\pr}(R=1_{p}\mid A=a,X)}\de\nu(X)-\int\tau(X)\frac{f(X\mid A=a,R=1_{p})}{\pr(R=1_{p}\mid A=a,X)}\de\nu(X)\right|>\epsilon\right\} \\
	& \leq & \lim_{n\rightarrow\infty}\pr\left\{ \left|\int_{I_{K}}\hat{\tau}(X)\frac{\hat{f}(X\mid A=a,R=1_{p})}{\widehat{\pr}(R=1_{p}\mid A=a,X)}\de\nu(X)-\int_{I_{K}}\tau(X)\frac{f(X\mid A=a,R=1_{p})}{\pr(R=1_{p}\mid A=a,X)}\de\nu(X)\right|>\frac{\epsilon}{2}\right\} \\
	& + & \lim_{n\rightarrow\infty}\pr\left[\left|\int_{I_{K}^{c}}\hat{\tau}(X)\left\{ \frac{\hat{f}(X\mid A=a,R=1_{p})}{\widehat{\pr}(R=1_{p}\mid A=a,X)}-\frac{f(X\mid A=a,R=1_{p})}{\pr(R=1_{p}\mid A=a,X)}\right\} \de\nu(X)\right|>\frac{\epsilon}{4}\right]\\
	& + & \lim_{n\rightarrow\infty}\pr\left[\left|\int_{I_{K}^{c}}\left\{ \hat{\tau}(X)-\tau(X)\right\} \frac{f(X\mid A=a,R=1_{p})}{\pr(R=1_{p}\mid A=a,X)}\de\nu(X)\right|>\frac{\epsilon}{4}\right]<\epsilon,
\end{eqnarray*}
that is, $\hat{\tau}$ is consistent for $\tau$.

\section{More details for the parametric estimation of $\tau$}

\subsection{An example of a bounded complete distribution}

The bounded completeness is a weaker concept than the completeness.
We say that a function $f(X,Y)$ is complete in $Y$ if $\int g(X)f(X,Y)\de\nu(X)=0$
implies $g(X)=0$ almost surely for any squared integrable function
$g(X)$. For illustration, we give sufficient conditions for the completeness
of distribution functions in an exponential family, which implies
the bounded completeness.

\begin{lemma}\label{lemma1} The distribution $f(X,Y)=\psi(X)h(Y)\exp\{\lambda(Y)^{\T}\eta(X)\}$
	is bounded complete in $y$ if (i) $\psi(X)>0$, (ii) $\lambda(Y)>0$
	for $Y\in\mathcal{B}$ when $\mathcal{B}$ is an open set, and (iii)
	the mapping $X\mapsto\eta(X)$ is one-to-one. \end{lemma}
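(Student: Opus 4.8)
The statement is essentially the classical completeness of a full-rank exponential family \citep{lehmann1950completeness} combined with the injectivity of $\eta$, and the plan is to prove it by a moment-generating-function uniqueness argument. Let $g(X)$ be measurable with $\sup_X|g(X)|\le c<\infty$ and $\int g(X)f(X,Y)\de\nu(X)=0$ for every $Y$ in the support $\mathcal B$. On $\mathcal B$ we have $h(Y)>0$, since otherwise $f(\cdot,Y)$ would vanish identically; dividing by $h(Y)$ gives
\[
\int g(X)\,\psi(X)\,\exp\{\lambda(Y)^{\T}\eta(X)\}\,\de\nu(X)=0,\qquad Y\in\mathcal B.
\]
By condition (ii) the image $\{\lambda(Y):Y\in\mathcal B\}$ contains a nonempty open set; fix $Y_0$ with $\lambda_0:=\lambda(Y_0)$ an interior point of it. Because the marginal density $f_Y(Y_0)=\int f(X,Y_0)\de\nu(X)$ is finite, we get $\int\psi(X)\exp\{\lambda_0^{\T}\eta(X)\}\de\nu(X)=f_Y(Y_0)/h(Y_0)<\infty$, and hence, using $\sup_X|g(X)|\le c$, also $\int|g(X)|\psi(X)\exp\{\lambda_0^{\T}\eta(X)\}\de\nu(X)\le c\,f_Y(Y_0)/h(Y_0)<\infty$. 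This is the only place boundedness of $g$ is needed, and it is what makes the conclusion \emph{bounded} completeness.

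Next I would transfer the problem to the range of $\eta$: let $\rho$ be the pushforward under $X\mapsto\eta(X)$ of the finite signed measure $g(X)\psi(X)\exp\{\lambda_0^{\T}\eta(X)\}\,\de\nu(X)$. Writing $\lambda(Y)=\lambda_0+s$ for $Y$ near $Y_0$, with $s$ ranging over an open neighbourhood $U$ of the origin, the displayed identity becomes $\int e^{s^{\T}t}\,\de\rho(t)=0$ for all $s\in U$. The estimate above shows that the transform $L(s)=\int e^{s^{\T}t}\,\de\rho(t)$ converges absolutely whenever $\lambda_0+s\in\{\lambda(Y):Y\in\mathcal B\}$, hence on a whole neighbourhood of $0$; since a finite signed measure whose moment generating function is finite near the origin is uniquely determined by it, $L\equiv0$ on $U$ forces $\rho=0$.

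Finally I would pull $\rho=0$ back through $\eta$. Because $\eta$ is one-to-one and bi-measurable (an injective Borel map between standard Borel spaces has a Borel inverse on its image, by the Lusin--Souslin theorem), a measure is determined by its $\eta$-pushforward, so $g(X)\psi(X)\exp\{\lambda_0^{\T}\eta(X)\}=0$ for $\nu$-almost every $X$; as $\psi(X)>0$ and the exponential factor is strictly positive, $g(X)=0$ $\nu$-a.e., which is exactly bounded completeness in $Y$. The step I expect to be the main obstacle is the middle one: one has to be sure that condition (ii) (with $\mathcal B$ open) really produces an open set of attainable natural parameters $\lambda(Y)$ and that $L$ is finite on a neighbourhood of the origin, so that its vanishing on $U$ propagates to $\rho=0$; by comparison, the measure-theoretic pullback through the injective map $\eta$ is a routine technicality.
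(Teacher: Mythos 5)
Your argument is correct and rests on the same core idea as the paper's proof: reparametrize by $T=\eta(X)$ and invoke the injectivity of the Laplace (moment-generating) transform. The execution differs, though, and yours is the more careful version. The paper changes variables with a Jacobian $[\dot\eta\{\eta^{-1}(T)\}]^{-1}$, which tacitly requires $\eta$ to be differentiable and $\nu$ to be Lebesgue measure; your pushforward of the finite signed measure $g(X)\psi(X)\exp\{\lambda_0^{\T}\eta(X)\}\,\de\nu(X)$ through the injective map $\eta$ (with the Lusin--Souslin pullback) needs neither and covers discrete or mixed $X$. You also make explicit two points the paper elides: where the boundedness of $g$ enters (it makes the transform finite near $\lambda_0$, which is precisely why only \emph{bounded} completeness is claimed), and the fact that the uniqueness step needs the attainable parameters $\{\lambda(Y):Y\in\mathcal{B}\}$ to contain a nonempty open subset of the ambient Euclidean space. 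That last point is the one real caveat, and it is shared by both proofs: condition (ii) as literally stated ($\lambda(Y)>0$ on an open $\mathcal{B}$) does not by itself yield an open image when $\lambda$ is vector-valued --- in the paper's own Gaussian illustration the image of $\lambda$ is a one-dimensional line in $\mathbb{R}^{p}$ --- so both arguments implicitly strengthen (ii). You flag this honestly as the main obstacle; the paper's assertion that the Laplace transform ``cannot be zero unless'' the integrand vanishes uses the same hypothesis without stating it.
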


\begin{proof} Suppose that $\int g(X)f(X,Y)\de\nu(X)=0$,
	which, in this setting, is 
	\begin{equation}
	h(Y)\int\tilde{g}(X)\exp\{\lambda(Y)^{\T}\eta(X)\}\de\nu(X)=0,\label{eq:integral}
	\end{equation}
	where $\tilde{g}(X)=g(X)\psi(X)$. Since the mapping $X\mapsto\eta(X)$
	is one-to-one, let $T=\eta(X)$ and therefore $X=\eta^{-1}(T)$. Then,
	the integral equation (\ref{eq:integral}) becomes 
	\begin{equation}
	h(Y)\int\tilde{g}\{\eta^{-1}(T)\}[\dot{\eta}\{\eta^{-1}(T)\}]^{-1}\exp\{\lambda(Y)^{\T}T\}\de\nu(T)=0,\label{eq:integral2}
	\end{equation}
	and particularly for $Y\in\mathcal{B}$, where $[\dot{\eta}\{\eta^{-1}(T)\}]^{-1}$
	is the Jacobian matrix with $\dot{\eta}(x)=\partial\eta(x)/\partial x$.
	The left hand side of the integral equation (\ref{eq:integral2})
	as a function of $\lambda(Y)$ is a multivariate Laplace transform
	of $\tilde{g}\{\eta^{-1}(T)\}[\dot{\eta}\{\eta^{-1}(T)\}]^{-1}$,
	and it cannot be zero unless $\tilde{g}\{\eta^{-1}(T)\}[\dot{\eta}\{\eta^{-1}(T)\}]^{-1}$
	is zero almost everywhere. Since $[\dot{\eta}\{\eta^{-1}(T)\}]^{-1}$
	is not zero, (\ref{eq:integral2}) holds only if $\tilde{g}(X)$ is
	zero almost everywhere. Moreover, since $\psi(X)$ is not zero, $g(X)$
	is zero almost everywhere. This completes the proof. \end{proof}

\begin{proposition} \label{eg:Gaussian}The Gaussian model 
	\begin{equation}
	f(X,Y)=f(Y\mid X)f(X)=\frac{1}{(2\pi\sigma^{2})^{1/2}}\exp\left\{ -\frac{(Y-\beta_{0}-\beta_{1}^{\T}X)^{2}}{2\sigma^{2}}\right\} f(X),\label{eq:2}
	\end{equation}
	is bounded complete in $Y$, where $\beta_{1}=(\beta_{11},\ldots,\beta_{1p})^{\T}$
	and $X=(X_{1},\ldots,X_{p})^{\T}$.
	
\end{proposition}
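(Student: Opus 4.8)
The plan is to reduce the claim to Lemma \ref{lemma1} by recasting the Gaussian joint density in the curved exponential‑family form $\psi(X)\,h(Y)\exp\{\lambda(Y)^{\T}\eta(X)\}$ used there. Expanding the square, $(Y-\beta_{0}-\beta_{1}^{\T}X)^{2}=(Y-\beta_{0})^{2}-2(Y-\beta_{0})\beta_{1}^{\T}X+(\beta_{1}^{\T}X)^{2}$, and collecting the factors that depend only on $X$, only on $Y$, and on both, I would write
\[
f(X,Y)=\left[\frac{1}{(2\pi\sigma^{2})^{1/2}}\exp\!\left\{-\frac{(\beta_{1}^{\T}X)^{2}}{2\sigma^{2}}\right\}f(X)\right]\exp\!\left\{-\frac{(Y-\beta_{0})^{2}}{2\sigma^{2}}\right\}\exp\!\left\{\frac{(Y-\beta_{0})}{\sigma^{2}}\,\beta_{1}^{\T}X\right\},
\]
so that in the notation of Lemma \ref{lemma1} one takes $\psi(X)=(2\pi\sigma^{2})^{-1/2}\exp\{-(\beta_{1}^{\T}X)^{2}/(2\sigma^{2})\}f(X)$, $h(Y)=\exp\{-(Y-\beta_{0})^{2}/(2\sigma^{2})\}$, $\lambda(Y)=(Y-\beta_{0})/\sigma^{2}$, and $\eta(X)=\beta_{1}^{\T}X$.

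Next I would check the three hypotheses of Lemma \ref{lemma1}. Condition (i) holds because $\psi(X)>0$ wherever $f(X)>0$; on the complement of the support of $X$ the conclusion ``$g=0$ almost surely'' is automatic, so it is enough to argue on that support. Condition (ii) holds with, say, $\mathcal{B}=(\beta_{0},\infty)$ or any open subinterval thereof, on which $\lambda(Y)=(Y-\beta_{0})/\sigma^{2}>0$. With these in hand, Lemma \ref{lemma1} yields the conclusion directly: if $\int g(X)f(X,Y)\,\de\nu(X)=0$ for all $Y$, then after cancelling the strictly positive factor $h(Y)$ and changing variables via $T=\eta(X)$, the Laplace transform of the pushforward of $g(X)\psi(X)\,\de\nu(X)$, evaluated at $\lambda(Y)$, vanishes on the open set $\mathcal{B}$; this forces $g\psi\equiv 0$, hence $g\equiv 0$ almost surely on the support of $X$, which is exactly bounded completeness in $Y$.

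The decisive step — and the main obstacle — is condition (iii) of Lemma \ref{lemma1}, namely that $X\mapsto\eta(X)=\beta_{1}^{\T}X$ be one‑to‑one: this is what makes the Gaussian example genuinely complete, since varying $Y$ probes the joint law only through the linear index $\beta_{1}^{\T}X$. When $X$ is scalar this holds as soon as $\beta_{1}\neq 0$, which is the case of interest, so I would state and use the result under $\beta_{1}\neq 0$ (and, more generally, with the mean parameterized so that $X\mapsto\beta_{1}^{\T}X$, or an equivalent sufficient‑statistic map, is injective). Everything else — the exponential‑family rearrangement, the positivity of $\psi$, the exhibition of $\mathcal{B}$, and the appeal to uniqueness of the Laplace transform — is routine bookkeeping once that injectivity is secured.
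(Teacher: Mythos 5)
Your factorization is genuinely different from the paper's, and the difference is exactly where the substance lies. The paper writes $f(X,Y)=\psi(X)\exp\{\lambda(Y)^{\T}\eta(X)\}$ with $\eta(X)=(X_{1},\ldots,X_{p})^{\T}$ the identity map and $\lambda(Y)=\sigma^{-2}(\beta_{11}Y,\ldots,\beta_{1p}Y)^{\T}$, so condition (iii) of Lemma \ref{lemma1} (injectivity of $\eta$) holds trivially in every dimension and the proposition is asserted for arbitrary $p$. You instead collapse the interaction term to the scalar pair $\lambda(Y)=(Y-\beta_{0})/\sigma^{2}$, $\eta(X)=\beta_{1}^{\T}X$, which forces you to require that $X\mapsto\beta_{1}^{\T}X$ be injective and hence to retreat to $p=1$ with $\beta_{1}\neq 0$. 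In that scalar case your argument is complete and correct, and coincides in substance with the paper's.

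The caveat you flag is not an artifact of your choice of factorization, however: it points at a real defect in the proposition as stated. For $p\geq 2$ the conditional density $f(Y\mid X)$ depends on $X$ only through the index $\beta_{1}^{\T}X$, so any bounded $g$ with $E\{g(X)\mid\beta_{1}^{\T}X\}=0$ almost surely satisfies $\int g(X)f(X,Y)\de\nu(X)=0$ for every $Y$; taking, say, $X_{1},X_{2}$ i.i.d.\ continuous, $\beta_{11}=\beta_{12}$, and $g(X)=\mathrm{sign}(X_{1}-X_{2})$ gives a nonzero such $g$, so bounded completeness in $Y$ fails. The paper's route does not evade this: with $\eta(X)=X$ the vector $\lambda(Y)$ sweeps out only the line $\{\sigma^{-2}y\beta_{1}:y\in\mathbb{R}\}$ in $\mathbb{R}^{p}$, and a multivariate Laplace transform vanishing on a line need not vanish identically, so the uniqueness step invoked in the proof of Lemma \ref{lemma1} is not actually available there. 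In short: your proof is right for scalar $X$, and the restriction you impose is the honest one; the gap is in the multivariate claim itself (which should be read as requiring an injective sufficient-statistic map, e.g.\ a $Y$-dependent natural parameter of full dimension), not in your argument.
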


\begin{proof}
	Using the notation in Lemma \ref{lemma1}, (\ref{eq:2}) can be expressed
	as $f(X,Y)=\psi(X)\exp\{\lambda(Y)^{\T}\eta(X)\}$ with $\psi(X)=(2\pi\sigma^{2})^{-1/2}f(X)$,
	$\lambda(Y)=\sigma^{-2}(\beta_{11}Y,\ldots,\beta_{1p}Y)^{\T}$ and
	$\eta(X)=(X_{1},\ldots,X_{p})^{\T}$. Therefore, (\ref{eq:2}) satisfies
	the conditions for $\lambda(Y)$ and $\eta(X)$, and it is bounded
	complete in $Y$. \end{proof}

\subsection{Likelihood-based inference: a fractional imputation approach\label{subsec:Frequentist-perspective}}

Let $S(\theta;Z_{i})=\partial\log f(Z_{i};\theta)/\partial\theta$
be the complete-data score for unit $i.$ The maximum likelihood estimator
$\hat{\theta}$ is a solution of the conditional score equation  \citep{kim2013statistical}
\begin{equation}
n^{-1}\sum_{i=1}^{n} \sum_{r\in\mathcal{R}} I(R_i=r)E\{S(\theta;Z_{i})\mid Z_{\obs,i},R_i=r;\theta\}=0,\label{eq:mean-score}
\end{equation}
where the conditional expectation is with respect to 
\begin{equation}
f(X_{\mis,i}\mid Z_{\obs,i},R_{i}=r;\theta)=\frac{f(A_{i},X_{i},Y_{i},R_{i}=r;\theta)}{\int f(A_{i},X_{i},Y_{i},R_{i}=r;\theta)\de\nu(X_{\mis,i})}.\label{eq:cond dist}
\end{equation}
The EM algorithm is a standard tool for solving \eqref{eq:mean-score}.
However, it has several drawbacks. First, the computation of the conditional
expectation in (\ref{eq:mean-score}) can be difficult due to the
possibly high-dimensional integration. Second, the conditional distribution
(\ref{eq:cond dist}) may not have an explicit form. We can use the
fractional imputation \citep{yang2015fractional} to overcome the
computation difficulties. The fractional imputation uses importance
sampling to avoid analytical calculation for evaluating the conditional
expectation.

In fractional imputation, we approximate the conditional expectation
in \eqref{eq:mean-score} by 
\begin{equation}
\sum_{r\in\mathcal{R}} I(R_i=r) E\{\tau(Z_{i};\theta)\mid Z_{\obs,i},R_i=r;\theta\}=\sum_{j=1}^{M}\omega_{ij}^{*}\tau(Z_{ij}^{*};\theta)+o_{p}(M^{-1/2}),\label{1-3}
\end{equation}
where $\{Z_{ij}^{*}=(A_{i},X_{\obsR,i},X_{\misR,i}^{*(j)},Y_{i},R_{i}):j=1,\ldots,M\}$
are the fractional observations and the $\omega_{ij}^{*}$'s are the
fractional weights that satisfies $\omega_{ij}^{*}\ge0$ and $\sum_{j=1}^{M}\omega_{ij}^{*}=1$.
Approximately, we can solve $\hat{\theta}$ from 
\begin{eqnarray}
\frac{1}{n}\sum_{i=1}^{n}\sum_{j=1}^{M}\omega_{ij}^{*}S(\theta;Z_{ij}^{*})=0.\label{eq::eq-fractional}
\end{eqnarray}
Computationally, we iteratively generate weighted fractional observations
satisfying \eqref{1-3} and solve the conditional score equation \eqref{eq::eq-fractional}.
This often converges to $\hat{\theta}$.

The key is to construct \eqref{1-3} using importance sampling. For
each missingness pattern $R_i=r$ and the missing value $X_{\mis,i}$, we first generate $X_{\mis,i}^{*(1)},\ldots,X_{\mis,i}^{*(M)}$
from a proposal distribution $h(X_{\mis,i}\mid Z_{\obs,i})$ for some
$h(\cdot)$ that is easy to simulate. We then compute 
\[
\omega_{ij}^{*}\propto\frac{f(X_{\mis,i}^{*(j)}\mid Z_{\obs,i};\hat{\theta})}{h(X_{\mis,i}^{*(j)}\mid Z_{\obs,i})}\propto\frac{f(Z_{ij}^{*};\hat{\theta})}{h(X_{\mis,i}^{*(j)}\mid Z_{\obs,i})},
\]
subject to $\sum_{j=1}^{M}\omega_{ij}^{*}=1$, as the fractional weight
for $Z_{ij}^{*}$.

As a by product, we can also use 
\[
\tilde{\tau}(\hat{\theta})=n^{-1}\sum_{i=1}^{n}\sum_{j=1}^{M}\hat{\omega}_{ij}\tau(\hat{Z}_{ij};\hat{\theta})
\]
as an estimator for $\tau$, where the $\hat{\omega}_{ij}$'s are
the weights for the fractional observations $\hat{Z}_{ij}$'s at the
maximum likelihood estimator $\hat{\theta}$. Clearly, $\tilde{\tau}(\hat{\theta})$
is an approximation to 
\[
\tilde{\tau}(\theta)=n^{-1}\sum_{i=1}^{n} \sum_{r\in\mathcal{R}} I(R_i=r)E\{\tau(X_{i};\theta)\mid Z_{\obs,i},R_i=r;\theta\},
\]
which satisfies $E\{\tilde{\tau}(\theta)\}=\tau.$

\subsection{Bayesian approach: an example with a scalar $X$}

Let $R$ be the missing indicator the scalar $X$. Suppose 
\begin{eqnarray*}
	\pr(R=0\mid A=a,X,Y;\eta) & = & \pr(R=0\mid A=a,X;\eta)=\left\{ 1+\exp(\eta_{a0}+\eta_{a1}X)\right\} ^{-1},\\
	f(Y\mid A=a,X;\beta) & = & (2\pi\sigma_{a}^{2})^{-1/2}\exp\{-(Y-\beta_{a0}-\beta_{a1}X)^{2}/(2\sigma_{a}^{2})\},\\
	\pr(A=1\mid X;\alpha) & = & \text{logit}(\alpha_{0}+\alpha_{1}X),\\
	f(X;\lambda) & = & (2\pi\sigma_{x}^{2})^{-1/2}\exp\{-(X-\mu_{x})^{2}/(2\sigma_{x}^{2})\},
\end{eqnarray*}
where $\eta=(\eta_{00},\eta_{01},\eta_{10},\eta_{11})$, $\beta=(\beta_{00},\beta_{01},\sigma_{0}^{2},\beta_{10},\beta_{11},\sigma_{1}^{2})$,
$\alpha=(\alpha_{0},\alpha_{1})$, and $\lambda=(\mu_{x},\sigma_{x}^{2})$.
The parametric $\theta=(\alpha,\beta,\eta,\lambda)$ has prior $\pi(\theta)$.
The complete-data likelihood is $L(\theta\mid Z_{1},\ldots,Z_{n})=\prod_{i=1}^{n}f(Z_{i};\theta),$
where 
\begin{eqnarray}
f(Z_{i};\theta) & = & \left[\frac{\exp(\eta_{10}+\eta_{11}X_{i})^{R_{i}}}{1+\exp(\eta_{10}+\eta_{11}X_{i})}\frac{1}{\left(2\pi\sigma_{1}^{2}\right)^{1/2}}\exp\left\{ -\frac{(Y_{i}-\beta_{10}-\beta_{11}X_{i})^{2}}{2\sigma_{1}^{2}}\right\} \right]^{A_{i}}\nonumber \\
&  & \times\left[\frac{\exp(\eta_{00}+\eta_{01}X_{i})^{R_{i}}}{1+\exp(\eta_{00}+\eta_{01}X_{i})}\frac{1}{\left(2\pi\sigma_{0}^{2}\right)^{1/2}}\exp\left\{ -\frac{(Y_{i}-\beta_{00}-\beta_{01}X_{i})^{2}}{2\sigma_{0}^{2}}\right\} \right]^{1-A_{i}}\nonumber \\
&  & \times\frac{\exp(\alpha_{0}+\alpha_{1}X_{i})^{A_{i}}}{1+\exp(\alpha_{0}+\alpha_{1}X_{i})}\times\frac{1}{\left(2\pi\sigma_{x}^{2}\right)^{1/2}}\exp\left\{ -\frac{(X_{i}-\mu_{x})^{2}}{2\sigma_{x}^{2}}\right\} .\label{eq:f(zi)}
\end{eqnarray}
By Lemma \ref{lemma1}, it is easy to verify that $f(A=a,X,Y,R=1)$
is bounded complete in $Y$. By Theorem \ref{thm2}, $\theta$ is
identifiable.

In the Bayesian estimation, we first simulate the posterior distribution
of the $Z_{i}$'s and $\theta$. Given the parameter value $\theta^{*}=(\alpha^{*},\beta^{*},\eta^{*},\lambda^{*}),$
we generate 
\begin{eqnarray*}
	X_{i}^{*} & \sim & f(X_{i}\mid A_{i},Y_{i},R_{i};\theta^{*})\\
	& \propto & \left[\frac{\exp(\eta_{10}^{*}+\eta_{11}^{*}X_{i})^{R_{i}}}{1+\exp(\eta_{10}^{*}+\eta_{11}^{*}X_{i})}\frac{1}{\left(2\pi\sigma_{1}^{*2}\right)^{1/2}}\exp\left\{ -\frac{(Y_{i}-\beta_{10}^{*}-\beta_{11}^{*}X_{i})^{2}}{2\sigma_{1}^{*2}}\right\} \right]^{A_{i}}\\
	&  & \times\left[\frac{\exp(\eta_{00}^{*}+\eta_{01}^{*}X_{i})^{R_{i}}}{1+\exp(\eta_{00}^{*}+\eta_{01}^{*}X_{i})}\frac{1}{\left(2\pi\sigma_{0}^{*2}\right)^{1/2}}\exp\left\{ -\frac{(Y_{i}-\beta_{00}^{*}-\beta_{01}^{*}X_{i})^{2}}{2\sigma_{0}^{*2}}\right\} \right]^{1-A_{i}}\\
	&  & \times\frac{\exp(\alpha_{0}^{*}+\alpha_{1}^{*}X_{i})^{A_{i}}}{1+\exp(\alpha_{0}^{*}+\alpha_{1}^{*}X_{i})}\frac{1}{\left(2\pi\sigma_{x}^{*2}\right)^{1/2}}\exp\left\{ -\frac{(X_{i}-\mu_{x}^{*})^{2}}{2\sigma_{x}^{*2}}\right\} 
\end{eqnarray*}
for units with $R_{i}=0$. For units with $R_{i}=1$, let $X_{i}^{*}=X_{i}.$
Given the imputed values $X_{i}^{*}$, we have the complete data $Z_{i}^{*}$,
and then generate $\theta^{*}\sim f(\theta\mid Z_{1}^{*},\ldots,Z_{n}^{*})\propto L(\theta\mid Z_{1}^{*},\ldots,Z_{n}^{*})\pi(\theta).$
Both steps may involve the Markov chain Monte Carlo.

Given $(\theta^{*},X_{1}^{*},\ldots,X_{n}^{*})$, we calculate $\hat{\tau}(\theta^{*})=n^{-1}\sum_{i=1}^{n}\tau(X_{i}^{*};\theta^{*})$
as a posterior draw of $\hat{\tau}(\theta)$. This gives the posterior
distribution of the average causal effect conditioning on the covariate
values.

\end{document}